\documentclass[journal,twoside,web]{ieeecolor}
\usepackage{lcsys}
\usepackage{cite}
\usepackage{amsmath,amssymb,amsfonts}
\usepackage[ruled,vlined]{algorithm2e}
\usepackage{algpseudocode}
\usepackage{graphicx}
\usepackage{textcomp}
\usepackage{algpseudocode}
\newtheorem{definition}{\textbf{Definition}}
\newtheorem{proposition}{\textbf{Proposition}}
\newtheorem{remark}{\textbf{Remark}}

\newtheorem{lemma}{\textbf{Lemma}}
\newtheorem{assumption}{\textbf{Assumption}}
\def\BibTeX{{\rm B\kern-.05em{\sc i\kern-.025em b}\kern-.08em
		T\kern-.1667em\lower.7ex\hbox{E}\kern-.125emX}}
\markboth{\journalname, VOL. XX, NO. XX, XXXX 2017}
{Author \MakeLowercase{\textit{et al.}}: Preparation of Papers for textsc{IEEE Control Systems
		Letters} (November 2021)}
\begin{document}
	\title{\LARGE
		Generalized Stochastic Dynamic Aggregative Game for Demand-Side Management in Microgrids with Shared Battery}
	\author{Shahram Yadollahi$^{1}$, Hamed Kebriaei$^{1}$, \IEEEmembership{Senior Member, IEEE}, and Sadegh Soudjani$^{2}$, \IEEEmembership{Member, IEEE} % <-this % stops a space 
		\thanks{$^{1}$ Shahram Yadollahi and Hamed Kebriaei are with the School of ECE, College of Engineering, University of Tehran, Tehran, Iran. Emails: (shahram.yadollahi@ut.ac.ir, kebriaei@ut.ac.ir)}%
		\thanks{$^{2}$ Sadegh Soudjani is with the School of Computing, Newcastle University, United Kingdom. Email:
			(Sadegh.Soudjani@newcastle.ac.uk)}}
	\maketitle
	\thispagestyle{empty} 
	\pagestyle{empty} % Removes all the page numbers (except for the title page)

%%%%%%%%%%%%%%%%%%%%%%%%%%%%%%%%%%%%%%%%%%%%%%%%%%%%%%%%%%%%%%%%%%%%%%%%%%%%%%%%
\begin{abstract}
In this paper, we focus on modeling and analysis of demand-side management in a microgrid where agents utilize grid energy and a shared battery charged by renewable energy sources.
We model the problem as a generalized stochastic dynamic aggregative game with chance constraints that capture the effects of uncertainties in the renewable generation and agents’ demands.
Computing the solution of the game is a complex task due to probabilistic and coupling constraints among the agents through the state of charge of the shared battery.
We investigate the Nash equilibrium of this game under uncertainty considering both the uniqueness of the solution and the effect of uncertainty on the solution.
Simulation results demonstrate that the presented stochastic method is superior to deterministic methods.
\end{abstract}

\begin{IEEEkeywords}
     Stochastic dynamic game, chance constraints, microgrids, shared battery, renewable energy sources.
\end{IEEEkeywords}

\section{ Introduction}
\IEEEPARstart{M}{ulti}-agent coordination for energy systems has emerged as a highly effective approach to achieve energy savings and maintain stability in microgrid systems. To accomplish this objective, concepts and techniques from game theory has been utilized due to their ability to incorporate user behavior and optimization perspectives \cite{Game-Theoretic-Methods-microgrid-Basar}.

%\textit{Game theory} is a field that studies mathematical models for decision-making when multiple individuals or organizations are involved and their choices affect each other's outcomes.
%One of the important types of games are generalized dynamic games.
%A \textit{Generalized Dynamic Game} combines the features of both generalized games \cite{Lacra2022-1} and dynamic games \cite{bacsar1998dynamic}, modeling the strategic interactions between selfish agents in a dynamical system where each agent's objective and action set depend on the actions of other agents as well as the system state, which evolves over time. A \textit{Generalized Stochastic Dynamic Game} refers to a type of generalized dynamic game that incorporates uncertainty in either or both the dynamical system and constraints.

In this paper, we study demand-side management (DSM) in microgrids as a \textit{generalized stochastic dynamic aggregative game} with uncertainties in renewable energy and demand, using a shared battery and selfish residential consumers. Additionally, operational constraints were taken into account in modeling the problem.
%\textcolor{red}{The problem is modeled as containing stochastic dynamics of a shared battery, chance constraints on state of charge of the battery, and aggregate effect of agents' consumption on the other agent's cost through the tariff function.}
To avoid suffering from peak loads, a term related to the cumulative exchange of power between the agents and the grid appears in the cost function of the agents as part of electricity tariff. This leads to the aggregative form of the proposed game model. Unlike previous studies, a more comprehensive form of stochastic constraints is considered in the form of chance constraints.
Due to the incorporation of a shared battery and accounting for the impact of uncertain sources, the state of charge (SoC) of the battery has stochastic dynamics shared between the agents. In addition, as we impose a constraint on the SoC of the shared battery, we have both dynamic and static stochastic coupling constraints among the agents, which shape the proposed game in a generalized, stochastic, and dynamic form. Then, we propose a series of reformulations and guaranteed under-approximations over the cost functions, stochastic dynamics, and chance constraints such that the game is converted into a static generalized aggregative form. Finally, we verify the conditions under which a Nash-seeking method can obtain the game's equilibrium point.

\noindent\textbf{Related Works.}
Game theory analyzes strategic interactions in demand-side management, modeling consumer behavior and optimizing incentives for efficient resource use.
%Game theory has been already used in several researches for \textit{demand-side management} (DSM) in microgird systems. 
%Games considered in this regard usually fall into two main categories: \textit{dynamic games} and \textit{generalized games}.
The main types of games studied in this context are \textit{dynamic games} that focus on the dynamic relationships among the agents (e.g., shared battery resources, dynamic pricing, etc.), and \textit{generalized aggregative games}, which account for the coupling constraints among the agents.
%\textit{Dynamic games} and \textit{Generalized aggreative games} are two main types of the games that have been studied in this regard.
Comprehensive reviews on the analysis and decomposition of dynamic games can be found in 
\cite{BasarZaccour2018,bacsar1998dynamic}.
%\cite{bacsar1998dynamic,haurie2012games,BasarZaccour2018}. 
%Moreover, in reference 12, a clearer exploration of the applications of dynamic games in economics, power systems, and demand-side management has been presented.
%From a theoretical viewpoint, different approaches have been used to solve a dynamic game, many works are based on using optimal control theory for studying open-loop and closed-loop Nash equilibrium, examples of the use of this method can be seen in references 13 and 14. Another method used in this regard is transforming the dynamic game into a generalized game and analyzing the transformed game, in this respect, references 15 and 16 can be cited. In reference 15, common dynamics were treated as constraints, and state variables as decision variables, leading to increased computational burden.
Various theoretical approaches have been employed to solve dynamic games utilizing optimal control theory to study open-loop and closed-loop Nash equilibrium \cite{bacsar1998dynamic,jank2003existence,reddy2015open}.
%\cite{engwedra2005lq}
For deterministic finite horizon discrete-time dynamic games, the state dynamic equation can be treated as a finite number of constraints which transforms the problem  into a generalized aggregative game \cite{abraham2019new}, %In \cite{abraham2019new}, common dynamic was considered as constraints, treating state variables as decision variables 
however, increasing the number of constraints can impose high computational cost.
%Stochastic dynamic games, which are a special kind of dynamic games where the effects of uncertainty factors are observed, have high practical relevance.
Stochastic dynamic games have also been studied for both continuous-time %acknowledging uncertainty factors, hold significant practical importance
\cite{ huang2006large,zhao2022varepsilon}
%pachter2010discrete,velez2021open
%\cite{huang2006large,zhao2022varepsilon}, studies have conducted on continuous-time systems, while
and discrete-time systems
\cite{pachter2010discrete,velez2021open}. %have studied discrete-time stochastic dynamic games.
However, these works do not study chance constraints on the system's state or coupling constraint on the control inputs of agents. In contrast to previous studies that use almost-sure satisfaction of constraints \cite{Farzaneh2020,zou2019game}, we employ chance constraints due to their flexibility and potential for better outcomes with a predefined confidence. Imposing deterministic constraints on estimated random variables have also been considered in \cite{joshi-kebriaei-2,mohsenian2010autonomous,mediwaththe2015dynamic}, but this approach overlooks the inherent stochastic nature of the uncertainties, increasing the risk of practical implementation failures.
From an application perspective,
%in \cite{pilz2020dynamic,liu2016algorithmic,DynamicGame-transactionSmartGrid-2016,mediwaththe2019incentive},
DSM is addressed through dynamic games with \cite{DynamicGame-transactionSmartGrid-2016,mediwaththe2019incentive}  or without  \cite{pilz2020dynamic,liu2016algorithmic} the presence of shared battery.
DSM has also been studied through generalized aggregative games%\cite{jo2020demand,atzeni2014noncooperative,estrella2019shrinking,joshi-kebriaei-2} 
\cite{jo2020demand,estrella2019shrinking,joshi-kebriaei-2}.
%have studied  .
In \cite{estrella2019shrinking,joshi-kebriaei-2},
%\cite{atzeni2014noncooperative,estrella2019shrinking,joshi-kebriaei-2},
 the impact of shared dynamics was not taken into account.
 %while this is considered in \cite{jo2020demand}.
 These works studied the game in deterministic setups without considering uncertainty sources in microgrid systems.
 %including the amount of energy absorbed by the battery and the consumer demand, examining DSM within the framework of games with stochastic setup is of significance.
 %Works such as \cite{Farzaneh2020,scarabaggio2021} can be mentioned in this area. In none of these works, a shared battery for the system was considered.
In this paper, we are dealing with a DSM game with stochastic shared dynamics of battery and coupling chance constraint on the state, referred to as a generalized stochastic dynamic aggregative game.
%Moreover, to the best of our knowledge, our study is the first to explore a generalized stochastic dynamic aggregative game with chance constraints.

\smallskip
\noindent\textbf{Contributions.}
%\subsection{}
The contributions of this paper are:
\begin{itemize}
    \item Introducing a novel framework for demand-side management as a generalized stochastic dynamic aggregative game, featuring constraints in the form of stochastic dynamics and chance constraints.
    \item Providing a guaranteed under-approximation for the game in the form of a
    %the generalized stochastic dynamic aggregative game with chance constraints to a corresponding
     generalized static aggregative game with deterministic constraints, in which, the feasible set of the resulted game is a subset of the feasible set of the original game.
    \item Analyzing the existence and uniqueness of the generalized Nash equilibrium (GNE) for the game, and proposing a semi-decentralized algorithm for Nash seeking.
    %\item We examine the impact of uncertainties and other parameters on the existence of GNE through a case study.
\end{itemize}

%The remainder of this paper is organized as follows.
%: \SES{You need dot here not double dot.}
%Section~\ref{sec.preliminaries} covers preliminaries, while Section~\ref{sec.system-model} presents the system model, game setup. In Section~\ref{sec.model-reformulation}, the model reformulation is presented to achieve a generalized deterministic static game. Section~\ref{sec.game-analysis} discusses the game analysis, and finally, Section~\ref{sec.simulation} showcases our simulation results.
%\section{PRELIMINARIES}\label{sec.preliminaries}
\smallskip

\section{review on some mathematical concepts and tools}\label{sec.preliminaries}
\subsection{Chernoff-Hoeffding inequality}
{\normalsize
Consider a sequence of random variables, $Z_i$, where $i$ ranges from $1$ to $n$, each with bounded support between $a_i$ and $b_i$. Let each of these variables possess an expected value, $\mathbf{E}\{Z_i\}$,
%\SES{situated within} \SES{not a good combination}
belonging to the moment interval, $\mathbb{M}_i$. Now, let's introduce a new random variable, $Z$, defined as the sum of all $Z_i$ from $i=1$ to $n$. %We aim to establish an inequality for this more generic random variable, $Z$.
By employing the \textit{Chernoff-Hoeffding inequality}, we have the following inequalities:

{\small
\begin{align*}
&\mathbf{Pr}\left\{Z - \mathbf{E}\left\{Z\right\} \leq -\zeta \right\} \leq \mathrm{exp}\left( \frac{-\zeta^2}{\nu \sum_{i = 1}^{n}\left(b_{i} - a_{i}\right)^2}\right)\, \ \forall \zeta \geq 0\\
&\mathbf{Pr}\left\{Z - \mathbf{E}\left\{Z\right\} \geq \zeta \right\} \leq \mathrm{exp}\left( \frac{-\zeta^2}{\nu \sum_{i = 1}^{n}\left(b_{i} - a_{i}\right)^2}\right)\ \ \forall \zeta \geq 0 \ .
\end{align*}
}
%{\color{red}Just present your case if you are working with dependent variables.}
In this context, $\nu = \chi(\hat{G})/2$ is a positive constant as referred in \cite{janson2004large}. Here, $\hat{G}$ signifies the undirected dependency graph of the random variables $Z_1, ..., Z_n$, and $\chi(\hat{G})$ represents the chromatic number of this graph. The chromatic number of $\hat{G}$, is defined as the minimum number of colors required to color the vertices of $\hat{G}$ such that no adjacent vertices share the same color.
%{\color{red} Not clear}.
For instances when the variables are independent, the chromatic number of the graph $\hat{G}$ is one.    

\subsection{Operator theoretic definitions} 
\begin{itemize}
	\item For a closed set $S \subseteq \mathbb{R}^n$, the mapping $\mathrm{proj}_{S}:\mathbb{R}^{n} \longrightarrow S$ denotes the \textit{projection} onto $S$, i.e., $\mathrm{proj}_{S}(x) = \mathrm{argmin}_{y \in S} \|y-x\|$.
	\item A set-valued mapping $\mathcal{F}: \mathbb{R}^n \rightrightarrows \mathbb{R}^n$ is \textit{$l-$Lipschiz continuous}, with $l > 0$, if $\|u-v\| \leq l\|x-y\|$ for all $x,y \in \mathbb{R}^{n}$,$u \in \mathcal{F}(x), v \in \mathcal{F}(y)$.
	\item A set-valued mapping $\mathcal{F}: \mathbb{R}^n \rightrightarrows \mathbb{R}^n$ is \textit{monotone} if $(u-v)^{T}(x-y) \geq 0$ for all $x \neq y \in \mathbb{R}^{n}, u \in \mathcal{F}(x), v \in \mathcal{F}(y)$.
	\item A set-valued mapping $\mathcal{F}: \mathbb{R}^n \rightrightarrows \mathbb{R}^n$ is \textit{strictly monotone} if $(u-v)^{T}(x-y) > 0$ for all $x \neq y \in \mathbb{R}^{n}, u \in \mathcal{F}(x), v \in \mathcal{F}(y)$.
	\item A set-valued mapping $\mathcal{F}: \mathbb{R}^n \rightrightarrows \mathbb{R}^n$ is $\eta-$\textit{strongly monotone} with $\eta > 0$, if $(u-v)^{T}(x-y) \geq \eta \|x-y\|^2$ for all $x, y \in \mathbb{R}^{n}, u \in \mathcal{F}(x), v \in \mathcal{F}(y)$.\\
\end{itemize}
%For a closed set $S \subseteq \mathbb{R}^n$, the mapping $\mathrm{proj}_{S}:\mathbb{R}^{n} \longrightarrow S$ denotes the \textit{projection} onto $S$, i.e., $\mathrm{proj}_{S}(x) = \mathrm{argmin}_{y \in S} \|y-x\|$.\\
%A set-valued mapping $\mathcal{F}: \mathbb{R}^n \rightrightarrows \mathbb{R}^n$ is \textit{$l-$Lipschiz continuous}, with $l > 0$, if $\|u-v\| \leq l\|x-y\|$ for all $x,y \in \mathbb{R}^{n}$,$u \in \mathcal{F}(x), v \in \mathcal{F}(y)$.\\
%A set-valued mapping $\mathcal{F}: \mathbb{R}^n \rightrightarrows \mathbb{R}^n$ is \textit{monotone} if $(u-v)^{T}(x-y) \geq 0$ for all $x \neq y \in \mathbb{R}^{n}, u \in \mathcal{F}(x), v \in \mathcal{F}(y)$.\\
%A set-valued mapping $\mathcal{F}: \mathbb{R}^n \rightrightarrows \mathbb{R}^n$ is \textit{strictly monotone} if $(u-v)^{T}(x-y) > 0$ for all $x \neq y \in \mathbb{R}^{n}, u \in \mathcal{F}(x), v \in \mathcal{F}(y)$.\\
%A set-valued mapping $\mathcal{F}: \mathbb{R}^n \rightrightarrows \mathbb{R}^n$ is $\eta-$\textit{strongly monotone} with $\eta > 0$, if $(u-v)^{T}(x-y) \geq \eta \|x-y\|^2$ for all $x, y \in \mathbb{R}^{n}, u \in \mathcal{F}(x), v \in \mathcal{F}(y)$.\\
\subsection{Generalized Variational Inequality}
%\begin{definition}[Generalized Variational Inequality]
Consider a closed convex set $S \subset \mathbb{R}^{n}$, a set-valued mapping
$\Psi : S \rightrightarrows \mathbb{R}^{n}$, and a single-valued mapping $\psi : S \rightarrow \mathbb{R}^{n}$. The generalized variational inequality problem $\mathrm{GVI}(S,\Psi)$, seeks to find $\mathbf{x}^{*} \in S $ and $\mathbf{g}^{*} \in \Psi(\mathbf{x}^{*})  $ the following condition holds for all $\mathbf{x} \in S$:
\begin{equation*}
(\mathbf{x} - \mathbf{x}^{*})^{T}\mathbf{g}^{*} \geq 0 . 
\end{equation*}
If $\Psi(\mathbf{x}) = {\psi(\mathbf{x})}$ for all $\mathbf{x} \in S$, then the generalized variational inequality problem $\mathrm{GVI}(S,\Psi)$ reduces to the standard variational inequality problem $\mathrm{VI}(S,\psi)$.\cite{belgioioso2018projected}
%\end{definition}
%\SES{Modify this paragraph}

\subsection{Gershgorin circle theorem}
The Gershgorin circle theorem may be used to bound the spectrum of a square matrix.\\
Let $A$ be a complex $n \times n$, with entries $a_{ij}$. For $i = 1,\dots,n$ let $R_{i}$ be the sum of absolute values of non-diagonal entries in $i-$th row:
\begin{align*}
    R_{i} = \sum_{j = 1,j \neq i}^{n}|a_{ij}|
\end{align*}
Let $D(a_{ii},R_{i}) \subset \mathbb{C}$ , be a closed disc centered at $a_{ii}$ with radius $R_{i}$. Such a disc is called Gershgorin disc.
\begin{lemma}
Every eigenvalue of $A$ lies within at least one of the Gershgorin discs $D(a_{ii},R_{i})$.
\end{lemma}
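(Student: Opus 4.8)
The plan is to work directly from the eigenvalue--eigenvector relation and localize the dominant component of an eigenvector. First I would fix an arbitrary eigenvalue $\lambda$ of $A$ and a corresponding nonzero eigenvector $x = (x_1,\dots,x_n)^T$, so that $Ax = \lambda x$. Since $x \neq 0$, I would pick an index $i$ at which $|x_i|$ attains its maximum, i.e. $|x_i| \geq |x_j|$ for every $j$, and note that necessarily $|x_i| > 0$.

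Next I would read off the $i$-th scalar equation of $Ax = \lambda x$, namely $\sum_{j=1}^{n} a_{ij} x_j = \lambda x_i$, and isolate the diagonal term to get $(\lambda - a_{ii}) x_i = \sum_{j \neq i} a_{ij} x_j$. Taking absolute values, applying the triangle inequality, and then using $|x_j| \leq |x_i|$ for all $j$ yields
\begin{align*}
|\lambda - a_{ii}|\,|x_i| \;\leq\; \sum_{j \neq i} |a_{ij}|\,|x_j| \;\leq\; \Big(\sum_{j \neq i} |a_{ij}|\Big)|x_i| \;=\; R_i\,|x_i| .
\end{align*}
Dividing through by $|x_i| > 0$ gives $|\lambda - a_{ii}| \leq R_i$, which is exactly the statement that $\lambda$ lies in the Gershgorin disc $D(a_{ii}, R_i)$. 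Since $\lambda$ was an arbitrary eigenvalue, every eigenvalue lies in at least one such disc.

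There is essentially no hard obstacle here; the one point requiring care is the justification that $|x_i| > 0$ (so that the final division is legitimate) and the observation that the index $i$ is what selects \emph{which} disc contains $\lambda$ — different eigenvalues may be captured by different discs. I would make sure to state clearly that the choice of $i$ depends on the eigenvector, so the theorem only asserts membership in the union of the discs, not in any particular one.
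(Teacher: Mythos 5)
Your proof is correct and complete: it is the standard argument for the Gershgorin circle theorem, and the one point that usually needs care --- choosing the index $i$ where the eigenvector component is largest so that the division by $|x_i|>0$ is legitimate, and noting that this choice of $i$ (hence of disc) depends on the eigenvector --- is handled explicitly. The paper itself states this lemma without proof, as a standard linear-algebra tool used later to bound the spectrum of $\Gamma$ and $\Gamma^2$, so there is no alternative argument in the paper to compare against.
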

}

\noindent\textbf{Notations.}
$\mathbb{R}$ denotes the set of real numbers.
%$\bar{\mathbb{R}} \triangleq \mathbb{R} \cup \{\infty\}$ the set of %extended real numbers.
$\mathbf{0}_{\tau}(\mathbf{1}_{\tau})$ denotes a vector with dimention $\tau \times 1$ that all elements equal to $0 (1)$. $\mathbf{1}_{\tau \times \tau}$ denotes a matrix with dimention $\tau \times \tau$ that all elements equal to $1$. 
$\mathbf{I}_{\tau}$ denotes a $\tau \times \tau$ identity matrix. $A \otimes B$ denotes the Kronecker product between matrices $A$ and $B$. Suppose that we have $\mathit{N}$ vectors, $\mathbf{x}_{1},...,\mathbf{x}_{N} \in \mathbb{R}^{n}$, then we define $\mathbf{x} \triangleq \left[\mathbf{x}_{1}^{T},...,\mathbf{x}_{N}^{T}\right]^{T}$. %\SSY{For a given function $f(t)$ and integer values $t_0$ and $t_1$, we define a vector $\mathbf{x}$ as $\mathbf{x} = \left[f(t) | t= t_0,t_0+1,\dots,t_1\right]^T$, equivalent to $\mathbf{x} = \left[f(t_0),f(t_0+1),\dots,f(t_1)\right]^T$.}
$\mathbf{M}_{\tau}$ is a $\tau \times \tau$ lower triangular matrix  such that $M_{\tau}(i,j) = 1$ if $i\leq j$, and zero otherwise.
% \begin{align*}
%     M_{\tau}(i,j) = \begin{cases}
%         1\hspace{2cm}\textrm{if}\  i \leq j\\
%         0\hspace{2cm}\textrm{otherwise}
%     \end{cases}
% \end{align*}
%\SES{use equation environment as much as possible. Use align only if you have multiple equations that should appear in different lines. This saves some space.}
%\SES{Modify this paragraph}

\section{System Model}\label{sec.system-model}
We investigate a grid-connected community microgrid as shown in Figure~\ref{fig.system model}, which comprises $N$ selfish residential households serving as agents in the system.
Each household has the capacity to meet its own energy requirements via the power grid and a shared battery, which is recharged using renewable resources. Furthermore, we incorporate in our model the uncertainty in both the renewable energy sources and the energy demand.
The microgrid operates under a tariff scheme with a retailer communicating the tariff information to the households. Each household uses this information to decide its battery discharging profile for minimizing the total cost of electricity consumption, which includes the cost of purchasing electricity from the retailer and the cost of using the battery to store and discharge energy.
%Overall, the microgrid presents a complex and dynamic system in which agents make independent decisions based on local information, while also interacting with a shared resource (i.e., the battery) that affects the decisions of all agents.
We consider the system where agents interact solely with a coordinator. All the decisions are taken over the day-ahead horizon. 

\subsection{Shared Battery Model}
Battery plays an important role in microgrid systems by providing energy storage solutions that can balance energy supply and demand,  %increase the penetration of renewable energy sources in the microgrid system, 
and contribute in peak load shaving by proving a grid-free source of energy in peak hours.
We consider that the battery is charged only through renewable energy and discharged by consumption of the agents from the battery. The state of charge (SoC) of the shared battery $x^{t}$
is considered to have the following dynamics
%\HK{Based on Dr.kebriaei comment}
\begin{equation}\label{SoC dynamics of the battery}
x^{t+1} = x^{t} + \eta \Delta t \left[r^{t} - \sum_{j= 1}^{N}u_{j}^{t} \right].
\end{equation}
\begin{figure}[ht]
\centering
\includegraphics[height = 0.6\columnwidth,width = \columnwidth]{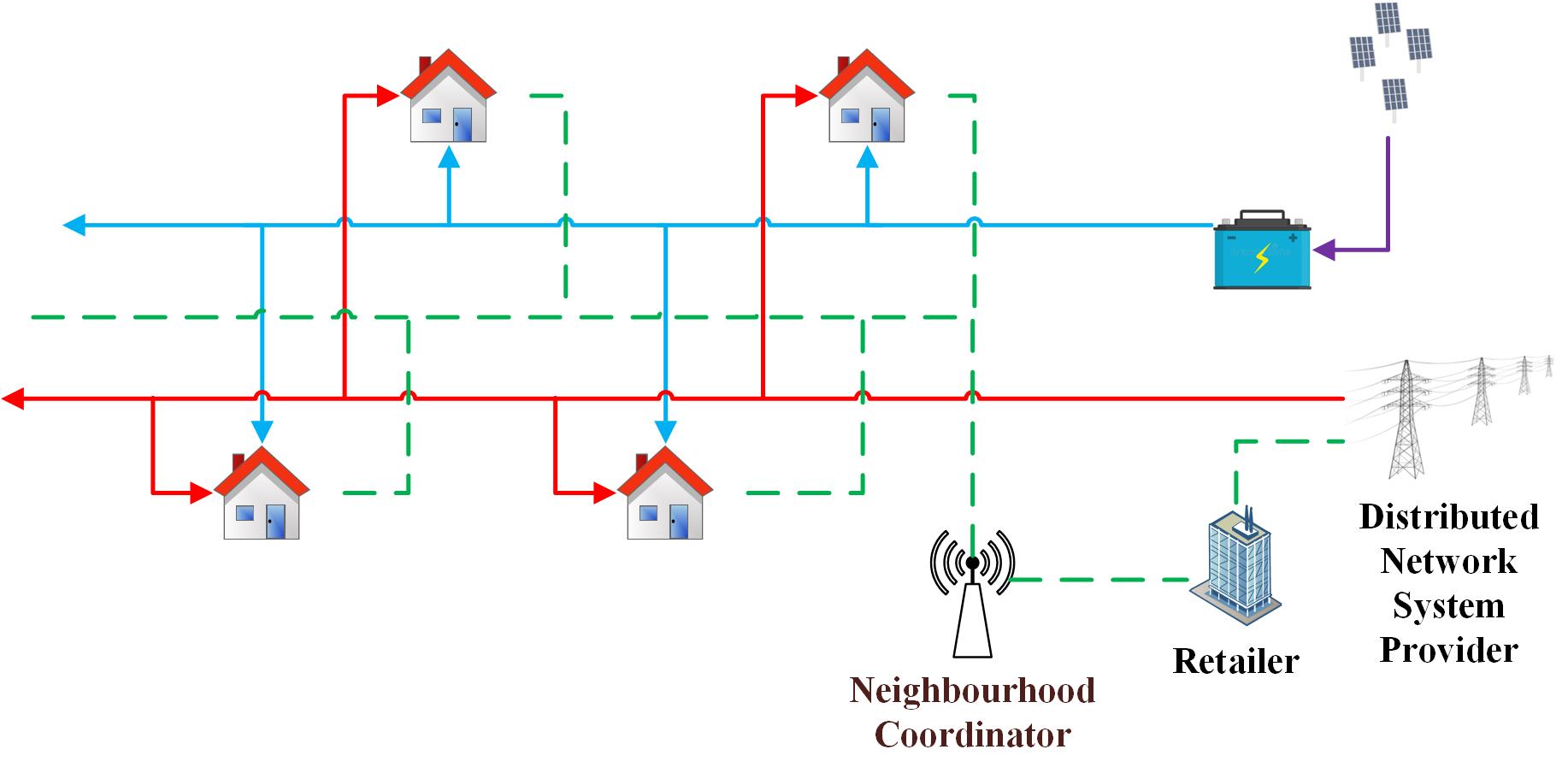}
\caption{System model. Red, blue and purple solid lines show respectively the energy supplied by the \textit{Distributed Network Provider}, \textit{shared battery} and \textit{renewable energy}. Green dashed line shows \textit{Essential Information Exchange (e.g. demands, common tariff)}.}
%\caption[Short caption]{System model Picture.\\
%\textit{Guide to Labels:}}
%{\small
%\begin{itemize}
%\item $r^{t}$: Power generated by renewable energy sources of the system at time $t$
%\item $u_{j}^{t}$: The discharging decision of the battery by the $j^{\text{th}}$ agent at time $t$
%\item $x^{t}$: SoC of the shared battery at time $t$
%\item $g_{i}^{t}$: Power exchange of $i^{\text{th}}$ agent with the grid at time $t$
%\item $g^{t} = \sum_{i=1}^{N} g_{i}^{t}$
%\end{itemize}}
\label{fig.system model}
\vspace{-0.5cm}
\end{figure}
At time $t$, $u_{j}^{t}$ is the discharging decision of the battery by the $j^{\text{th}}$ agent and $r^{t}$ is the power generated by renewable energy sources, which is unknown and stochastic.
The parameter $\eta$ is the charging/discharging efficiency of the shared battery, and $\Delta t$ is the sampling time.
%the time interval between two consecutive time samples.

Due to the stochastic input in \eqref{SoC dynamics of the battery}, we consider the following chance constraints on the SoC and its final value:
%that is a coupling variable among the agents:
\begin{align}
\label{Soc constraint}
	& \mathbf{Pr} \{  \underline{x} \leq x^{t} \leq \bar{x}\} \geq  1- \delta _{x}^{t} \hspace{0.5cm} \forall t \in \left \{0,\dots,\tau  \right \},\\
 & \mathbf{Pr}\{|  x^{\tau}-r | \leq \epsilon \} \geq  1-\delta_{x}^{\tau,\mathrm{final}},
 \label{reformulation last time SoC}
\end{align}
where $\underline{x}, \bar{x}, \delta _{x}^{t}$, and $\delta_{x}^{\tau,\mathrm{final}}$ are positive constants in $[0,1]$ such that $\underline{x} < \bar{x}$,
 %We impose the following condition concerning the battery's SoC at the final time step:
$r\in \left[\underline{x},\bar{x}\right]$ is a positive constant, and $\epsilon\in\left(0,\textrm{min}\left\{r-\underline{x},\bar{x}-r\right\}\right)$. %Moreover, $\delta_{x}^{\tau,\mathrm{final}}$ shows the targeted probability of satisfaction.
The chance constraint in \eqref{reformulation last time SoC} guarantees that the SoC at the final time step lies within a specific range around $r$ with a certain confidence.
We also have the following constraint:
\begin{equation}
0 \leq u_{i}^{t} \leq \bar{u}_{i} \hspace{0.5cm} \forall t \in \left \{0,\dots,\tau-1  \right \}. \label{u constaint1}
\end{equation}
%

% \begin{remark}
%     The chance constraint \eqref{reformulation last time SoC} leads to positioning the final value of the SoC within a desired range around a specific value with an appropriate probability.
% \end{remark}

\subsection{Power exchange model}

The load balance equation for household $i$ at time $t \in  \{0,...,\tau-1\} $ can be written as $g_{i}^{t} =  d_{i}^{t} - u_{i}^{t}$,
%\HK{Based on Dr.kebriaei comment}
% \begin{equation}
% \label{load balance equation}
% 	g_{i}^{t} =  d_{i}^{t} - u_{i}^{t}
% \end{equation}
%
where $g_{i}^{t}$ is power exchange of the $i^{\text{th}}$ agent with the grid and $d_{i}^{t}$ is its stochastic power demand.
%{\color{red} ...}
%\subsection{Retailer role in Model}
Let $g^{t} \triangleq \sum_{i=1}^{N} g_{i}^{t}$. The retailer imposes the following constraints on the allowable power exchange with the community of the microgrid:
\begin{equation}\label{retailer constraint}
	\mathbf{Pr}\left\{0 \leq g^{t} \leq \bar{g} \right\} \geq 1- \delta _{g}^{t}\quad \forall t \in \{0,\dots,\tau-1\},
\end{equation}
where $\bar{g}$ is the maximum power supply of the retailer, and $\delta_{g}^{t}\in\left[0,1\right]$.
\subsection{Electricity tarrifs}\label{sec.SystemModel-tarrif}
All the agents in the neighbourhood are billed using common electricity tariffs, modelled as
\begin{equation}
\label{tarrif-calculation}
    \pi(g^{t}) = K_{ToU}^{t} + \sum_{j=1}^{N} g_{j}^{t} k_{c}^{N}\;,
\end{equation}
where $\pi(g^{t})$ is the common electricity tariff for agents at time $t$ and $K_{ToU}^{t}$ is the conventional time-of-use pricing tariff that could depend on hour of the day. The positive constant $k_{c}^{N}$ influences the
cost in proportion to the peak power consumption creates a balance between the microgrid's objectives and ensuring fairness, peak shaving, and stability in the tariff system. As mentioned in \cite{Farzaneh2020,chavali2014distributed}, \( k_{c}^{N} \) is selected to be inversely proportional to the number of agents (this dependency is denoted by the superscript $N$), which performs normalization in aggregative term of grid electricity use of agents. The tariff
function \eqref{tarrif-calculation} is based on the hypothesis that increasing the price
of electricity at times of peak demand will motivate a rational
household to schedule the shared battery such that the community
peak can be shaved \cite{Shahidehpour-tariff,Carlson-tariff}.
%\begin{remark}
%Reducing the amount of load supplied by the main grid to a microgrid, especially during periods of peak demand, can help reduce the peak load on the main grid supplying the microgrid. This allows the main grid to be designed for lower peak capacity, which improves efficiency and overall economics \cite{hatziargyriou2007microgrids,akhil2013electricity}. For example, reducing the amount of load supplied by the main grid to a microgrid means less current in the feeders, which reduces losses. This improves the overall efficiency of power distribution in the microgrid. Lower peak loads on the feeders also allow the installation of lower capacity components such as cables, switches, transformers, etc. This reduces the capital cost of microgrid infrastructure. 
%\end{remark}
\subsection{Cost Function of Each Agent in Model}
The cost function of each agent (household) is
\begin{align}\label{cost-pure}
	J_{i} = \mathbf{E} \Bigg\{\sum_{t=0}^{\tau-1}    \Bigg[& \pi\left(g^t\right) g_{i}^{t}+
	 \sum_{j=1}^{N} \left(\alpha^{dch} \left(u_{j}^{t}\right)^{2} + \beta^{dch} u_{j}^{t}\right) \Bigg ] \Bigg\},
\end{align}
where $\alpha^{dch}, \beta^{dch}$ are positive constants.
The cost function has a conventional form used frequently in the literature (see e.g., \cite{joshi-kebriaei-2,Farzaneh2020}). It consists of two parts: one related to the cost of electricity and the other to battery degradation, which serves as a proxy for the shared battery lifespan.
%, and $K_{ToU}^{t}$ is the conventional time-of-use pricing tariff, which can be different for each hour of the day and $k_{c}$ is a positive constant, which influences the cost in proportion to the peak power consumption.\\

%Let $\mathbf{u}_{j}^{t} = \left [ u{j}^{ch,t},u_{j}^{dch,t} \right ]^{T}$ and $\mathbf{u}_{j}=\left[ \mathbf{u}_{j}^{0^T},\mathbf{u}_{j}^{1^T},\dots,\mathbf{u}_{j}^{T-1^T} \right]^{T}$. With these definitions in place, 
Let $\mathbf{u}_{i}=[ u_{i}^{0}, u_{i}^{1},\dots,u_{i}^{\tau-1}]^{T}$. We are now equipped to delineate the game-theoretic setup for our demand-side management model, as
% https://tex.stackexchange.com/questions/60453/reducing-font-size-in-equation#:~:text=Just%20put%20%5Csmall%20before%20the,than%20to%20change%20font%20size.
{\small
     \begin{align*}
     %\label{Game setup purly in Generalized Stochastic Dynamic Game}
      \mathcal{G} = \begin{cases}
        \textbf{Players:}\  \textrm{A set of residential agents}\  \mathcal{N} = \{1,2,\dots,N\}\\
        \textbf{Strategies of Agents:}\  \mathbf{u}_{i} \hspace{1cm} i \in \mathcal{N}\\
        \textbf{Cost Functions:}\               
                        J_{i}\left(\mathbf{u}_{i},\mathbf{u}_{-i}\right) 
        \hspace{0.75cm} i \in \mathcal{N}\\
        \textbf{Stochastic Dynamic:}\  \eqref{SoC dynamics of the battery}\\
        \textbf{Constraints:}
        \begin{cases}
        \textbf{Local:} \ \eqref{u constaint1}\\
        \textbf{Coupling}\ \textbf{(Chance Constraints):}\ \eqref{Soc constraint},\eqref{reformulation last time SoC}, %\hspace{5cm}
        \eqref{retailer constraint}
        \end{cases}
        \end{cases}           
    \end{align*}
}
As demonstrated above,
%in \eqref{Game setup purly in Generalized Stochastic Dynamic Game},
$\mathcal{G}$ evidently corresponds to a generalized stochastic dynamic aggregative game.
%\red{you are messing up the space. I want you to do a clean resizing.}
%\SES{you need to use eqref instead of ref for referring to equations!}

\section{Approximation with Generalized Static Aggregative
Game}
\label{sec.model-reformulation}
We raise the following assumption for constructing an under-approximation of the feasible set of the game.
\begin{assumption}
The initial state $x^{0}$ is known and $\underline{x} \leq x^{0}  \leq \bar{x}$. Additionally, $r^t$ and $d_{i}^{t}$ are random variables with bounded support $[a^t,b^t]$ and $[c_{i}^{t},f_{i}^{t}]$, and their mean values are available to each agent. Random variables $r^t$ and $d_{i}^{t}$ could in general be dependent with known dependency graph. 
\end{assumption}

The above assumption is based on the fact that the power generated by renewable energy sources and  power demand of agents are bounded.
Let us denote the input constraints in
\eqref{u constaint1} in vector format as $\mathbf{0}_{\tau}   \leq \mathbf{u}_{i}  \leq 
\bar{u}_{i}\mathbf{1}_{\tau}$.

\begin{proposition}
\label{pro. Soc reformulation}
The feasible domain of the chance constraint \eqref{Soc constraint} can be under-approximated with the deterministic inequalities
%\SSY{(for further insights about Chernoff-Hoeffding inequality, see subsection I.A of Supplementary Material (SM) and \cite{janson2004large})}
\begin{align}
\label{Soc constraint Concatenate leq}
& \left(\underline{x}-x^{0}\right)\mathbf{1}_{\tau} - \rho \mathbf{M}_{\tau} \mu_{r} + \mathbf{q}_{x,1}+\sum_{j=1}^{N}\rho\mathbf{M}_{\tau}\mathbf{u}_{j} \leq 0,\\
& \left(x^{0} - \bar{x}\right)\mathbf{1}_{\tau} + \rho \mathbf{M}_{\tau} \mu_{r} + \mathbf{q}_{x,2}
-\sum_{j=1}^{N}\rho\mathbf{M}_{\tau}\mathbf{u}_{j} \leq 0, \label{Soc constraint Concatenate geq}
\end{align}
where
$\mu_{r} = \left[
\mathbf{E}\left\{r^{0}\right\},\mathbf{E}\left\{r^{1}\right\},\dots,\mathbf{E}\left\{r^{\tau - 1}\right\}\right]^T$,
$\mathbf{q}_{x,1} = \left[q_{x,1}^1,q_{x,1}^2,\dots,q_{x,1}^\tau \right]^T$,
$\mathbf{q}_{x,2} = \left[q_{x,2}^1,q_{x,2}^2,\dots,q_{x,2}^\tau \right]^T$,
with $q_{x,1}^t \triangleq \sqrt{-\rho^2 \nu_{r}^{t}\sum_{k=0}^{t-1}\left(b^{k} - a^{k}\right)^{2} \mathrm{Ln}(\tilde{\delta}_{x}^{t})}$,
$q_{x,2}^t \triangleq \sqrt{-\rho^2 \nu_{r}^{t}\sum_{k=0}^{t-1}\left(b^{k} - a^{k}\right)^{2} \mathrm{Ln}(\tilde{\delta}_{x}^{t}- \delta_{x}^{t})}$
, $0 \leq \tilde{\delta}_{x}^{t} \leq 1$,
$0 \leq \delta_{x}^{t} - \tilde{\delta}_{x}^{t}  \leq 1$, and
$\rho \triangleq \eta \Delta t$. Moreover, $\nu_{r}^{t} = \chi(\hat{G}_{r}^{t})/2$ with $\hat{G}_{r}^{t}$ being the undirected dependency graph of the random variables $r^0,\ldots, r^{t-1}$, and $\chi(\hat{G}_{r}^{t})$ represents the chromatic number of this graph (see \cite{janson2004large}).
%\SSY{Additional details regarding this reformulation, such as the utilization of the Chernoff-Hoeffding inequality, can be found in subsection II.A of the SM.}
%\textcolor{blue}{%For further insights about Chernoff-Hoeffding inequality, see subsection I-A of Supplementary Material (SM) and \cite{janson2004large}).Additional details regarding this reformulation, such as the utilization of the Chernoff-Hoeffding inequality, can be found in subsection II-A of the SM.
%Subsection I-A and II-A of Supplementary Material (SM) and \cite{janson2004large} give more insights on Chernoff-Hoeffding inequality and its utilization in our problem.}  
\end{proposition}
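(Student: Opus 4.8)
The plan is to unroll the stochastic recursion \eqref{SoC dynamics of the battery}, split the two-sided chance constraint \eqref{Soc constraint} into two one-sided tail events, and control each tail with the Chernoff--Hoeffding inequality recalled in Section~\ref{sec.preliminaries}. First I would telescope \eqref{SoC dynamics of the battery} to obtain, for every $t\in\{1,\dots,\tau\}$, $x^{t}=x^{0}+\rho\sum_{k=0}^{t-1}r^{k}-\rho\sum_{j=1}^{N}\sum_{k=0}^{t-1}u_{j}^{k}$ with $\rho=\eta\Delta t$. The only randomness is carried by $Z^{t}:=\rho\sum_{k=0}^{t-1}r^{k}$, a sum of bounded random variables $\rho r^{k}\in[\rho a^{k},\rho b^{k}]$ with dependency graph $\hat G_{r}^{t}$, while the control terms are deterministic decisions; the case $t=0$ is trivial by Assumption~1 since $\underline{x}\le x^{0}\le\bar x$.

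Next, since $\underline{x}<\bar x$, the events $\{x^{t}<\underline{x}\}$ and $\{x^{t}>\bar x\}$ are disjoint, so $\mathbf{Pr}\{\underline{x}\le x^{t}\le\bar x\}=1-\mathbf{Pr}\{x^{t}<\underline{x}\}-\mathbf{Pr}\{x^{t}>\bar x\}$, and it is enough to enforce $\mathbf{Pr}\{x^{t}<\underline{x}\}\le\tilde\delta_{x}^{t}$ together with $\mathbf{Pr}\{x^{t}>\bar x\}\le\delta_{x}^{t}-\tilde\delta_{x}^{t}$. This is the role of the split parameter $\tilde\delta_{x}^{t}$, and the conditions $0\le\tilde\delta_{x}^{t}\le1$, $0\le\delta_{x}^{t}-\tilde\delta_{x}^{t}\le1$ keep the arguments of the logarithms in $(0,1]$, so that $q_{x,1}^{t}$ and $q_{x,2}^{t}$ are real.

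For the lower tail, I would rewrite $\{x^{t}<\underline{x}\}$ as $\{Z^{t}-\mathbf{E}\{Z^{t}\}\le-\zeta_{t}\}$ with $\zeta_{t}=\mathbf{E}\{Z^{t}\}+x^{0}-\rho\sum_{j}\sum_{k=0}^{t-1}u_{j}^{k}-\underline{x}$; the first Chernoff--Hoeffding inequality, with $\nu=\nu_{r}^{t}$ and widths $\rho(b^{k}-a^{k})$, bounds this probability by $\mathrm{exp}\big(-\zeta_{t}^{2}/(\rho^{2}\nu_{r}^{t}\sum_{k=0}^{t-1}(b^{k}-a^{k})^{2})\big)$ provided $\zeta_{t}\ge0$, and requiring this to be at most $\tilde\delta_{x}^{t}$ is equivalent to $\zeta_{t}\ge q_{x,1}^{t}$. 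Substituting $\mathbf{E}\{Z^{t}\}=\rho\sum_{k=0}^{t-1}\mathbf{E}\{r^{k}\}$, this is exactly the $t$-th scalar inequality contained in \eqref{Soc constraint Concatenate leq}; the upper tail is symmetric and yields the $t$-th row of \eqref{Soc constraint Concatenate geq}. Stacking these $2\tau$ scalar inequalities over $t=1,\dots,\tau$, with the cumulative sums $\sum_{k=0}^{t-1}(\cdot)$ realized by $\mathbf{M}_{\tau}$ and the constant offsets by multiples of $\mathbf{1}_{\tau}$, produces precisely \eqref{Soc constraint Concatenate leq}--\eqref{Soc constraint Concatenate geq}; since each of them is a sufficient condition for \eqref{Soc constraint} at that $t$, the feasible set of \eqref{Soc constraint Concatenate leq}--\eqref{Soc constraint Concatenate geq} is contained in that of \eqref{Soc constraint}, i.e. it is an under-approximation.

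The point needing care, rather than a deep obstacle, is the sign hypothesis $\zeta_{t}\ge0$ (and its analogue for the upper tail) required before Chernoff--Hoeffding applies: this costs nothing because $q_{x,1}^{t},q_{x,2}^{t}\ge0$, so the inequalities we impose already force $\zeta_{t}\ge0$. One should also separately dispatch the degenerate case in which $r^{0},\dots,r^{t-1}$ have zero-width support, so that $x^{t}$ is deterministic and $q_{x,1}^{t}=q_{x,2}^{t}=0$, where the bounds collapse to the exact deterministic range. The remainder is index bookkeeping: matching the time index $t\in\{1,\dots,\tau\}$ with the $0,\dots,\tau-1$ indexing of $\mu_{r}$ and of the supports $[a^{k},b^{k}]$, and with the triangular structure of $\mathbf{M}_{\tau}$.
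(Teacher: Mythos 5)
Your proposal is correct and follows essentially the same route as the paper's proof: unroll the dynamics, split the two-sided chance constraint into two one-sided tails via the parameter $\tilde{\delta}_{x}^{t}$, apply the Chernoff--Hoeffding bound to the centered sum $\rho\sum_{k=0}^{t-1}(r^{k}-\mathbf{E}\{r^{k}\})$, and keep only the square-root branch consistent with the sign condition required by the tail bound. Your observations that the sign hypothesis is automatically enforced because $q_{x,1}^{t},q_{x,2}^{t}\geq 0$, and that the two tail events are disjoint, are slightly cleaner packagings of steps the paper also carries out.
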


\begin{proof}
The proof is based on deriving from %\SES{the \eqref{SoC dynamics of the battery}} \SES{not a good combination}
\eqref{SoC dynamics of the battery} the explicit form of $x^t$ as
% \begin{equation}
% 	x^{t+1} = x^{t} - a \sum_{j=1}^{N} u_{j}^{t} + a r^{t}.
% \end{equation}
%The explicit form of $x^t$ can be recursively derived from equation \eqref{SoC dynamics of the battery}, as follows:
\begin{equation}\label{general state}
	x^{t} = x^{0} + \sum_{k=0}^{t-1}\left \{ -\rho \left ( \sum_{j=1}^{N}  u_{j}^{k}     \right)   + \rho r^{k} \right \}
\end{equation}
and using \textit{Chernoff-Hoeffding inequality} \cite{janson2004large}.\\
It is obvious that:
\begin{align*}
    &\mathbf{Pr}\{\underline{x} \leq x^t \leq \bar{x} \} \geq 1-\delta_{x}^{t}
    \Longleftrightarrow \\ &\mathbf{Pr}\left\{\left(\underline{x} \leq x^t\right) \bigcap \left(x^t \leq \bar{x} \right) \right\} \geq 1-\delta_{x}^{t} \Longleftrightarrow \\
    &\mathbf{Pr}\left\{\left(x^t \leq \underline{x}\right) \bigcup \left( \bar{x}\leq x^t \right) \right\} \leq \delta_{x}^{t} \Longleftarrow\\&
    \begin{cases}
    	\mathbf{Pr}\left\{x^t \leq \underline{x}  \right\} \leq \tilde{\delta}_{x}^{t} \\
    	\mathbf{Pr}\left\{ \bar{x}\leq x^t  \right\} \leq \delta_{x}^{t} - \tilde{\delta}_{x}^{t}
    \end{cases}.\\
    %&\mathbf{Pr}\left\{x^t \leq \underline{x}  \right\} \leq \tilde{\delta}_{x}^{t} \\
    %&\mathbf{Pr}\left\{ \bar{x}\leq x^t  \right\} \leq 1-\delta_{x}^{t} - \tilde{\delta}_{x}^{t}  \\
\end{align*}
where $\tilde{\delta}_{x}^{t}$ is real value in $[0,1]$ such that $0 \leq \delta_{x}^{t} - \tilde{\delta}_{x}^{t} \leq 1$. Consequently, we establish certain sufficient conditions for $\mathbf{Pr}\{\underline{x} \leq x^t \leq \bar{x} \} \geq 1-\delta_{x}^{t}$. Initially, we commence with the condition $\mathbf{Pr}\left\{x^t \leq \underline{x}  \right\} \leq \tilde{\delta}_{x}^{t}$.
{\footnotesize
\begin{align*}
	&\mathbf{Pr}\left\{ x^0 + \sum_{k=0}^{t-1}\sum_{j=1}^{N}(-\rho )u_{j}^{k} + \sum_{k=0}^{t-1}\rho r^k  \leq \underline{x}\right\} \leq \tilde{\delta}_{x}^{t}\Longleftrightarrow\\
	&\mathbf{Pr}\left\{ x^0 + \sum_{k=0}^{t-1}\sum_{j=1}^{N}(-\rho)u_{j}^{k} + \sum_{k=0}^{t-1}\rho r^k -\sum_{k=0}^{t-1}\rho\mu_{r}^{k} \leq \underline{x} -\sum_{k=0}^{t-1}\rho\mu_{r}^{k} \right\} \leq \tilde{\delta}_{x}^{t}\\&\Longleftrightarrow\\
	&\mathbf{Pr}\left\{ \sum_{k=0}^{t-1}\rho r^k -\sum_{k=0}^{t-1}\rho \mu_{r}^{k} \leq \underline{x} - x^0  - \sum_{k=0}^{t-1}\sum_{j=1}^{N}(-\rho)u_{j}^{k} -\sum_{k=0}^{t-1}\rho\mu_{r}^{k} \right\} \leq \tilde{\delta}_{x}^{t}\\&\Longleftrightarrow\\
	&\mathbf{Pr}\left\{ \sum_{k=0}^{t-1}r^k -\sum_{k=0}^{t-1}\mu_{r}^{k} \leq \frac{1}{\rho}\left(\underline{x} - x^0  - \sum_{k=0}^{t-1}\sum_{j=1}^{N}(-\rho)u_{j}^{k} -\sum_{k=0}^{t-1}\rho\mu_{r}^{k} \right)\right\} \\&\leq \tilde{\delta}_{x}^{t}
\end{align*}
}
In order to employ the \textit{Chernoff-Hoeffding inequality}, it is essential that the following condition be satisfied:
\begin{align*}
    \left(\underline{x} - x^0  - \sum_{k=0}^{t-1}\sum_{j=1}^{N}(-\rho)u_{j}^{k} -\sum_{k=0}^{t-1}\rho\mu_{r}^{k} \right) \leq 0
\end{align*}
If the aforementioned condition is met, we can, based on the Chernoff-Hoeffding inequality, state the following:
{\footnotesize
\begin{align*}
&\mathbf{Pr}\left\{ \sum_{k=0}^{t-1}r^k -\sum_{k=0}^{t-1}\mu_{r}^{k} \leq -\frac{1}{\rho}\left(-\underline{x} + x^0  + \sum_{k=0}^{t-1}\sum_{j=1}^{N}(-\rho)u_{j}^{k} +\sum_{k=0}^{t-1}\rho\mu_{r}^{k} \right)\right\} \\
&\leq \mathrm{exp}\left( 
\frac{-\left[\frac{1}{\rho}\left(-\underline{x} + x^0  + \sum_{k=0}^{t-1}\sum_{j=1}^{N}(-\rho)u_{j}^{k} +\sum_{k=0}^{t-1}\rho\mu_{r}^{k} \right)\right]^2}{\nu^t \sum_{k =0}^{t-1}(b^k - a^k)^2} \right)
\end{align*}}
So it is necessary to have:
{\footnotesize
\begin{align*}
&\mathrm{exp}\left( 
\frac{-\left[\frac{1}{\rho}\left(-\underline{x} + x^0  + \sum_{k=0}^{t-1}\sum_{j=1}^{N}(-\rho)u_{j}^{k} +\sum_{k=0}^{t-1}\rho\mu_{r}^{k} \right)\right]^2}{\nu^t \sum_{k =0}^{t-1}(b^k - a^k)^2} \right)  \leq \tilde{\delta}_{x}^{t}\\& \Longleftrightarrow\\
%&\left( 
%\frac{-\left[\frac{1}{a}\left(-\underline{x} + x^0  + \sum_{k=0}^{t-1}\sum_{j=1}^{N}(-a)u_{j}^{k} +\sum_{k=0}^{t-1}a\mu_{r}^{k} \right)\right]^2}{\nu^t \sum_{k =0}^{t-1}(b^k - a^k)^2} \right)  \leq \mathrm{Ln}\tilde{\delta}_{x}^{t} \Longleftrightarrow\\
&\left(-\underline{x} + x^0  + \sum_{k=0}^{t-1}\sum_{j=1}^{N}(-\rho)u_{j}^{k} +\sum_{k=0}^{t-1}\rho\mu_{r}^{k} \right)^2 \geq -\rho^2 \nu^t \mathrm{Ln}\tilde{\delta}_{x}^{t}\sum_{k =0}^{t-1}(b^k - a^k)^2
\end{align*}}
%It is obvious that $-a^2 \nu^t \mathrm{Ln}\tilde{\delta}_{x}^{t}\sum_{k =0}^{t-1}(b^k - a^k)^2$ is non-negative, so:
Clearly, the term $-\rho^2 \nu^t \mathrm{Ln}\tilde{\delta}_{x}^{t}\sum_{k =0}^{t-1}(b^k - a^k)^2$ is non-negative, thus we can state:
{\small
\begin{align*}
    &-\underline{x} + x^0  + \sum_{k=0}^{t-1}\sum_{j=1}^{N}(-\rho)u_{j}^{k} +\sum_{k=0}^{t-1}\rho\mu_{r}^{k} \\&\geq \sqrt{-\rho^2 \nu^t \mathrm{Ln}\tilde{\delta}_{x}^{t}\sum_{k =0}^{t-1}(b^k - a^k)^2}\\
    &\mathrm{Or}\\
    &-\underline{x} + x^0  + \sum_{k=0}^{t-1}\sum_{j=1}^{N}(-\rho)u_{j}^{k} +\sum_{k=0}^{t-1}\rho\mu_{r}^{k} \\& \leq - \sqrt{-\rho^2 \nu^t \mathrm{Ln}\tilde{\delta}_{x}^{t}\sum_{k =0}^{t-1}(b^k - a^k)^2}    
\end{align*}}
But $-\underline{x} + x^0  + \sum_{k=0}^{t-1}\sum_{j=1}^{N}(-\rho)u_{j}^{k} +\sum_{k=0}^{t-1}\rho\mu_{r}^{k} \geq \sqrt{-\rho^2 \nu^t \mathrm{Ln}\tilde{\delta}_{x}^{t}\sum_{k =0}^{t-1}(b^k - a^k)^2}$ is only valid, base on $\left(\underline{x} - x^0  - \sum_{k=0}^{t-1}\sum_{j=1}^{N}(-\rho)u_{j}^{k} -\sum_{k=0}^{t-1}\rho\mu_{r}^{k} \right) \leq 0$. Hence, we arrive at the following conclusion:
{\small
\begin{align*}
	&\underline{x} - x^0  - \sum_{k=0}^{t-1}\sum_{j=1}^{N}(-\rho)u_{j}^{k} -\sum_{k=0}^{t-1}\rho\mu_{r}^{k} \\&+ \sqrt{-\rho^2 \nu^t \mathrm{Ln}\tilde{\delta}_{x}^{t}\sum_{k =0}^{t-1}(b^k - a^k)^2}\leq 0
\end{align*}}
Now, we commence with the condition $\mathbf{Pr}\left\{  \bar{x} \leq x^t \right\}  \leq  \delta_{x}^{t} -~ \tilde{\delta}_{x}^{t}$.
{\footnotesize
\begin{align*}
  &\mathbf{Pr}\left\{  \bar{x} \leq x^t \right\}  \leq \delta_{x}^{t} - \tilde{\delta}_{x}^{t} \Longleftrightarrow\\&
  \mathbf{Pr}\left\{ \bar{x} \leq x^0 + \sum_{k=0}^{t-1}\sum_{j=1}^{N}(-\rho)u_{j}^{k} + \sum_{k=0}^{t-1}\rho r^k \right\}  \leq \delta_{x}^{t} - \tilde{\delta}_{x}^{t} 
  \Longleftrightarrow \\& \mathbf{Pr}\left\{ \bar{x} - x^0 - \sum_{k=0}^{t-1}\sum_{j=1}^{N}(-\rho)u_{j}^{k} - \sum_{k=0}^{t-1}\rho\mu_r^k \leq \sum_{k=0}^{t-1}\rho r^k - \sum_{k=0}^{t-1}\rho\mu_r^k \right\}  \\&\leq \delta_{x}^{t} - \tilde{\delta}_{x}^{t}
  \Longleftrightarrow \\&\mathbf{Pr}\left\{  \sum_{k=0}^{t-1}r^k - \sum_{k=0}^{t-1}\mu_r^k \geq \frac{1}{\rho}\left( \bar{x} - x^0 - \sum_{k=0}^{t-1}\sum_{j=1}^{N}(-\rho)u_{j}^{k} - \sum_{k=0}^{t-1}\rho\mu_r^k \right) \right\}  \\&\leq \delta_{x}^{t} - \tilde{\delta}_{x}^{t} 
\end{align*}}
In order to employ the \textit{Chernoff-Hoeffding inequality}, it is essential that the following condition be satisfied:
\begin{align*}
   \bar{x} - x^0 - \sum_{k=0}^{t-1}\sum_{j=1}^{N}(-\rho)u_{j}^{k} - \sum_{k=0}^{t-1}\rho\mu_r^k \geq 0 
\end{align*}
If the aforementioned condition is met, we can, based on the Chernoff-Hoeffding inequality, state the following:
{\footnotesize
\begin{align*}
    &\mathbf{Pr}\left\{  \sum_{k=0}^{t-1}r^k - \sum_{k=0}^{t-1}\mu_r^k \geq \frac{1}{\rho}\left( \bar{x} - x^0 - \sum_{k=0}^{t-1}\sum_{j=1}^{N}(-\rho)u_{j}^{k} - \sum_{k=0}^{t-1}\rho\mu_r^k \right) \right\} \\
    &\leq \mathrm{exp} \left( \frac{-\left[  \frac{1}{\rho}\left( \bar{x} - x^0 - \sum_{k=0}^{t-1}\sum_{j=1}^{N}(-\rho)u_{j}^{k} - \sum_{k=0}^{t-1}\rho\mu_r^k \right) \right]^2}{\nu^t \sum_{k=0}^{t-1}\left( b^k - a^k\right)^2 } \right)
\end{align*}}
Like the previous part, it is necessary to have:
\begin{align*}
    %&-\left[  \frac{1}{a}\left( \bar{x} - x^0 - \sum_{k=0}^{t-1}\sum_{j=1}^{N}(-a)u_{j}^{k} - \sum_{k=0}^{t-1}a\mu_r^k \right) \right]^2 \leq \nu^t \mathrm{Ln} \left( 1 - \delta_{t} - \tilde{\delta}_{t} \right)\sum_{k=0}^{t-1}\left( b^k - a^k\right)^2\\
    &\left( \bar{x} - x^0 - \sum_{k=0}^{t-1}\sum_{j=1}^{N}(-\rho)u_{j}^{k} - \sum_{k=0}^{t-1}\rho\mu_r^k \right)^2 \\ &\geq -\rho^2\nu^t \mathrm{Ln} \left(  \delta_{x}^{t} - \tilde{\delta}_{x}^{t} \right)\sum_{k=0}^{t-1}\left( b^k - a^k\right)^2    
\end{align*}
It is obvious that $-\rho^2\nu^t \mathrm{Ln} \left( 1 - \delta_{t} - \tilde{\delta}_{t} \right)\sum_{k=0}^{t-1}\left( b^k - a^k\right)^2$ is non-negative, so:
\begin{align*}
 &\bar{x} - x^0 - \sum_{k=0}^{t-1}\sum_{j=1}^{N}(-\rho)u_{j}^{k} - \sum_{k=0}^{t-1}\rho\mu_r^k \\&\geq \sqrt{-\rho^2\nu^t \mathrm{Ln} \left(  \delta_{x}^{t} - \tilde{\delta}_{x}^{t} \right)\sum_{k=0}^{t-1}\left( b^k - a^k\right)^2}\\
 & \mathrm{Or}\\
 &\bar{x} - x^0 - \sum_{k=0}^{t-1}\sum_{j=1}^{N}(-\rho)u_{j}^{k} - \sum_{k=0}^{t-1}\rho\mu_r^k \\&\leq - \sqrt{-\rho^2\nu^t \mathrm{Ln} \left( \delta_{x}^{t} - \tilde{\delta}_{x}^{t} \right)\sum_{k=0}^{t-1}\left( b^k - a^k\right)^2}
\end{align*}
But $\bar{x} - x^0 - \sum_{k=0}^{t-1}\sum_{j=1}^{N}(-\rho)u_{j}^{k} - \sum_{k=0}^{t-1}\rho\mu_r^k \geq \sqrt{-\rho^2\nu^t \mathrm{Ln} \left( \delta_{x}^{t} - \tilde{\delta}_{x}^{t} \right)\sum_{k=0}^{t-1}\left( b^k - a^k\right)^2}$ is just valid, base on $\bar{x} - x^0 - \sum_{k=0}^{t-1}\sum_{j=1}^{N}(-\rho)u_{j}^{k} - \sum_{k=0}^{t-1}\rho\mu_r^k \geq 0 $. Hence, we arrive at the following conclusion:
{\small
\begin{align*}
    &-\bar{x} + x^0 + \sum_{k=0}^{t-1}\sum_{j=1}^{N}(-\rho)u_{j}^{k} \\&+ \sum_{k=0}^{t-1}\rho\mu_r^k+\sqrt{-\rho^2\nu^t \mathrm{Ln} \left( \delta_{x}^{t} - \tilde{\delta}_{x}^{t} \right)\sum_{k=0}^{t-1}\left( b^k - a^k\right)^2} \leq 0
\end{align*}}
\end{proof}

%%%%%%%%%%%%%%%%%%%%%%%%%%%
%We can show the above inequality as
%\begin{align*}
%- u_{i} &\leq \mathbf{0}_{2T \times 1}    \\
%u_{i} - \mathbf{1}\otimes \bar{u}_{i} &\leq \mathbf{0}_{2T \times 1}
%\end{align*}
%%%%%%%%%%%%%%%%%%%%%%%%%%%%%%%%
%In practice, it is common to impose a constraint on the state of charge (SoC) at the last time step, as in \cite{joshi-kebriaei-2,joshi-kebriaei-1}. Specifically, the constraint can be expressed as follows:

%\begin{align} \label{last time SoC}
%	x^{T} = x^{0}
%\end{align}

%It is evident that in \eqref{last time SoC}, a stochastic variable is set equal to a known value. Since the SoC value is selected from a continuous set and we lack knowledge of the uncertainties each time, the probability of satisfying \eqref{last time SoC} becomes zero.

%To address this issue, we propose to replace \eqref{last time SoC} with the following constraint:

%\begin{align}\label{reformulation last time SoC}
%\mathbf{Pr}\{x^{0}-\epsilon \leq  x^{T} \leq %x^{0}+\epsilon \} \geq \delta_{T,equal}
%\end{align} 

%where $\epsilon$ represents the tolerance level and $\delta_{T,equal}$ denotes the desired probability of satisfaction. The constraint in \eqref{reformulation last time SoC} ensures that the probability of the SoC at the last time step lying within a certain range around $x^{0}$ is greater than or equal to $\delta_{T,equal}$.

\begin{proposition}
The feasible domain of the chance constraint \eqref{reformulation last time SoC} can be under-approximated with the deterministic
inequalities
%{\color{blue}like \textit{Proposition \ref{pro. Soc reformulation}}}, as
\begin{align}
&r\!-\!\epsilon\!-\!x^{0}\! - \!\rho \mathbf{1}_{\tau}^{T} \mu_{r} \!+\!q_{x,\mathrm{final},1}\!+\!\sum_{j=1}^{N}\rho\mathbf{1}_{\tau}^{T}\mathbf{u}_{j} \leq 0,  \label{deterministic form of reformulation last time SoC leq}\\
&x^{0}\! -\! r \!-\! \epsilon\! +\! \rho \mathbf{1}_{\tau}^{T} \mu_{r} \!+\!q_{x,\mathrm{final},2} \!+\!\sum_{j=1}^{N}\left(-\rho\mathbf{1}_{\tau}^{T}\right)\mathbf{u}_{j} \leq 0, \label{deterministic form of reformulation last time SoC geq}
\end{align}
where $q_{x,\mathrm{final},1} = \sqrt{-\rho^2 \nu_{r}^{\tau}\sum_{k=0}^{\tau-1}\left(b^{k} - a^{k}\right)^{2} \mathrm{Ln}( \tilde{\delta}_{x}^{\tau,\mathrm{final}})}$, $q_{x,\mathrm{final},2} = \sqrt{-\rho^2 \nu_{r}^{\tau}\sum_{k=0}^{\tau-1}\left(b^{k} - a^{k}\right)^{2} \mathrm{Ln}( \delta_{x}^{\tau,\mathrm{final}} -\tilde{\delta}_{x}^{\tau,\mathrm{final}} )}$,
with $0 \leq \tilde{\delta}_{x}^{\tau,\mathrm{final}} \leq 1$, and $0 \leq  \delta_{x}^{\tau,\mathrm{final}} -\tilde{\delta}_{x}^{\tau,\mathrm{final}} \leq 1$. 
\begin{proof}
    The proof is like Proposition \ref{pro. Soc reformulation}.
\end{proof}
\end{proposition}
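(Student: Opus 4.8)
The plan is to replay the argument of Proposition~\ref{pro. Soc reformulation} with the time index frozen at the terminal value $\tau$. First I would rewrite the two-sided terminal event as $\{|x^{\tau}-r|\le\epsilon\}=\{r-\epsilon\le x^{\tau}\le r+\epsilon\}$, pass to complements, and split by a union bound: for any $0\le\tilde{\delta}_{x}^{\tau,\mathrm{final}}\le\delta_{x}^{\tau,\mathrm{final}}$, the chance constraint $\mathbf{Pr}\{|x^{\tau}-r|\le\epsilon\}\ge 1-\delta_{x}^{\tau,\mathrm{final}}$ is implied by the pair $\mathbf{Pr}\{x^{\tau}\le r-\epsilon\}\le\tilde{\delta}_{x}^{\tau,\mathrm{final}}$ and $\mathbf{Pr}\{x^{\tau}\ge r+\epsilon\}\le\delta_{x}^{\tau,\mathrm{final}}-\tilde{\delta}_{x}^{\tau,\mathrm{final}}$. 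This is precisely the step at which the reformulation becomes a genuine under-approximation (the converse fails for a suboptimal split), so the feasible set of the resulting inequalities sits inside the feasible set of \eqref{reformulation last time SoC}, as claimed.

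Next I would substitute the explicit solution $x^{\tau}=x^{0}+\sum_{k=0}^{\tau-1}\{-\rho\sum_{j=1}^{N}u_{j}^{k}+\rho r^{k}\}$ from \eqref{general state} (with $t=\tau$) into each one-sided event and recenter the stochastic part by subtracting its mean $\rho\sum_{k=0}^{\tau-1}\mu_{r}^{k}=\rho\mathbf{1}_{\tau}^{T}\mu_{r}$, turning each event into a tail statement about $\sum_{k=0}^{\tau-1}(r^{k}-\mu_{r}^{k})$. Since $r^{0},\dots,r^{\tau-1}$ have bounded supports $[a^{k},b^{k}]$ with a known dependency graph, the Chernoff--Hoeffding inequality of Section~\ref{sec.preliminaries}, with $\nu_{r}^{\tau}=\chi(\hat{G}_{r}^{\tau})/2$, bounds each tail by $\exp(-(\cdot)^{2}/(\nu_{r}^{\tau}\sum_{k=0}^{\tau-1}(b^{k}-a^{k})^{2}))$. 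Imposing that these exponential bounds be at most $\tilde{\delta}_{x}^{\tau,\mathrm{final}}$ and $\delta_{x}^{\tau,\mathrm{final}}-\tilde{\delta}_{x}^{\tau,\mathrm{final}}$, then taking logarithms and square roots, yields exactly the terms $q_{x,\mathrm{final},1}$ and $q_{x,\mathrm{final},2}$; grouping the deterministic variables $\mathbf{u}_{j}$ with coefficient $\rho\mathbf{1}_{\tau}^{T}$ (respectively $-\rho\mathbf{1}_{\tau}^{T}$) and collecting the constants into $r-\epsilon-x^{0}-\rho\mathbf{1}_{\tau}^{T}\mu_{r}$ (respectively $x^{0}-r-\epsilon+\rho\mathbf{1}_{\tau}^{T}\mu_{r}$) gives \eqref{deterministic form of reformulation last time SoC leq} and \eqref{deterministic form of reformulation last time SoC geq}.

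The main obstacle, exactly as in Proposition~\ref{pro. Soc reformulation}, is the sign bookkeeping around invoking Chernoff--Hoeffding: the bound $\mathbf{Pr}\{S-\mathbf{E}\{S\}\le-\zeta\}\le\exp(\cdot)$ is informative only for $\zeta\ge0$, i.e. only when the deterministic gap between the threshold and the recentred argument has the correct sign, and after squaring and taking a square root one obtains two candidate branches. I would argue that only the branch producing \eqref{deterministic form of reformulation last time SoC leq}--\eqref{deterministic form of reformulation last time SoC geq} is admissible, since that inequality both implies the sign prerequisite (the square-root term being nonnegative) and delivers the required tail bound, while the other branch contradicts that prerequisite. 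The condition $\epsilon\in(0,\min\{r-\underline{x},\bar{x}-r\})$ keeps the target interval $[r-\epsilon,r+\epsilon]$ nontrivial, so the two one-sided requirements can hold simultaneously. The remaining algebra is a verbatim copy of the $t$-indexed computation and is routine.
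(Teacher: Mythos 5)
Your proposal is correct and follows exactly the route the paper intends: the paper's own proof simply states ``the proof is like Proposition~\ref{pro. Soc reformulation},'' and your argument is precisely that proof replayed at $t=\tau$ with the interval $[\underline{x},\bar{x}]$ replaced by $[r-\epsilon,r+\epsilon]$, including the union-bound split of the confidence level, the recentring before Chernoff--Hoeffding, and the sign/branch selection. No gaps.
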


%\subsection{Retailer Role in the Model}
%Retailer imposes the following constraints on the allowable power exchange with the community of microgrid:
%\begin{align}\label{retailer constraint}
%	g_{\leftarrow} \leq g^{t} \leq g_{\rightarrow}\hspace{2cm} \forall t \in \{0,\dots,T-1\}
%\end{align}

%where $g_{\leftarrow}$ and $g_{\rightarrow}$ are the maximum power that can be purchased and supplied by the retailer respectively.

%It is reasonable to regard the demand of each agent at every point in time as a finite random variable.
%{\color{blue}
% \begin{assumption}
% we posit that $d_{i}^{t}$ is a random variable drawing values within the range of $[c_{i}^{t},f_{i}^{t}]$. Furthermore, we hypothesize that the mean value of $d_{i}^{t}$ is readily available to each agent at all instances. 
% \end{assumption}
%}
%Now we can extract the concatenation form of \eqref{retailer constraint} reformulation for all $t = \{0,\dots,\tau-1\}$ in the following inequalities:
\begin{proposition}
The feasible domain of the chance constraint \eqref{retailer constraint} can be under-approximated with the deterministic
inequalities
\begin{align}
& -\sum_{j=1}^{N}\mu_{d_{j}} + \mathbf{q}_{g,1} + \sum_{j=1}^{N}\mathbf{u}_{j}  \leq 0,
\label{retailer role in model concatanation leq}\\
& -\bar{g}\mathbf{1}_{\tau} + \sum_{j=1}^{N}\mu_{d_{j}} + \mathbf{q}_{g,2} - \sum_{j=1}^{N}\mathbf{u}_{j}  \leq 0 ,\label{retailer role in model concatanation geq}
\end{align}
where $\mu_{d_{j}} = \left[\mathbf{E}\left\{d_{j}^{0}\right\},\mathbf{E}\left\{d_{j}^{1}\right\},\dots,\mathbf{E}\left\{d_{j}^{\tau-1}\right\}\right]^{T}$, 
$\mathbf{q}_{g,1} = \left[q_{g,1}^0,q_{g,1}^1,\dots,q_{g,1}^{\tau-1}\right]^T$, $\mathbf{q}_{g,2} = \left[q_{g,2}^0,q_{g,2}^1,\dots,q_{g,2}^{\tau-1}\right]^T$,
with $q_{g,1}^t = \sqrt{- \nu_{d}^{t}\sum_{j=1}^{N}\left(f_{j}^{t} - c_{j}^{t}\right)^{2} \mathrm{Ln}(\tilde{\delta}_{g}^{t})}$, $q_{g,2}^t = \sqrt{- \nu_{d}^{t}\sum_{j=1}^{N}\left(f_{j}^{t} - c_{j}^{t}\right)^{2} \mathrm{Ln}( \delta_{g}^{t} - \tilde{\delta}_{g}^{t} )}$, 
$0 \leq \tilde{\delta}_{g}^{t} \leq 1$, and $0 \leq \delta_{g}^{t} - \tilde{\delta}_{g}^{t} \leq 1$. Moreover, $\nu_{d}^{t} = \chi(\hat{G}_{d}^{t})/2$ with $\hat{G}_{d}^{t}$ being the undirected dependency graph of the random variables $d_{1}^{t},\ldots, d_{N}^{t}$.
%and $\chi(\hat{G}_{d}^{t})$ represents the chromatic number of this graph. 
%\textcolor{blue}{Subsection II.B of the SM provides more information about this reformulation, including how the Chernoff-Hoeffding inequality is used in this part.}
\end{proposition}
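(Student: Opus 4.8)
The plan is to replicate, at each time step separately, the argument used in the proof of Proposition~\ref{pro. Soc reformulation}. The simplification here is that $g^{t}=\sum_{j=1}^{N}g_{j}^{t}=\sum_{j=1}^{N}(d_{j}^{t}-u_{j}^{t})$ involves only the demands and inputs at the single instant $t$, with no accumulation over time, so no explicit ``state'' expression analogous to \eqref{general state} is needed. First I would split the two-sided chance constraint by the union bound: since
\[
\mathbf{Pr}\{0\leq g^{t}\leq\bar{g}\}\geq 1-\delta_{g}^{t}\ \Longleftrightarrow\ \mathbf{Pr}\{(g^{t}\leq 0)\cup(g^{t}\geq\bar{g})\}\leq\delta_{g}^{t},
\]
it is sufficient to impose the pair $\mathbf{Pr}\{g^{t}\leq 0\}\leq\tilde{\delta}_{g}^{t}$ and $\mathbf{Pr}\{g^{t}\geq\bar{g}\}\leq\delta_{g}^{t}-\tilde{\delta}_{g}^{t}$ for a split parameter $\tilde{\delta}_{g}^{t}\in[0,1]$ with $\delta_{g}^{t}-\tilde{\delta}_{g}^{t}\in[0,1]$.

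For the first inequality, writing $g^{t}=\sum_{j}d_{j}^{t}-\sum_{j}u_{j}^{t}$ and centering $\sum_{j}d_{j}^{t}$ about its mean $\sum_{j}\mathbf{E}\{d_{j}^{t}\}$, the event $\{g^{t}\leq 0\}$ becomes a lower-tail event for the bounded-support sum $\sum_{j}d_{j}^{t}$. Provided the deterministic quantity $\sum_{j}u_{j}^{t}-\sum_{j}\mathbf{E}\{d_{j}^{t}\}$ is non-positive --- which is precisely the sign condition needed to legitimately apply the lower-tail Chernoff--Hoeffding bound with $a_{j}=c_{j}^{t}$, $b_{j}=f_{j}^{t}$ and $\nu=\nu_{d}^{t}$ --- the probability is bounded by $\mathrm{exp}\bigl(-(\sum_{j}\mathbf{E}\{d_{j}^{t}\}-\sum_{j}u_{j}^{t})^{2}/(\nu_{d}^{t}\sum_{j}(f_{j}^{t}-c_{j}^{t})^{2})\bigr)$; forcing this expression below $\tilde{\delta}_{g}^{t}$, taking logarithms, and retaining the square-root branch consistent with the sign condition yields the $t$-th row of \eqref{retailer role in model concatanation leq}. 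The inequality $\mathbf{Pr}\{g^{t}\geq\bar{g}\}\leq\delta_{g}^{t}-\tilde{\delta}_{g}^{t}$ is treated symmetrically: it is an upper-tail event, the relevant sign condition is $\bar{g}+\sum_{j}u_{j}^{t}-\sum_{j}\mathbf{E}\{d_{j}^{t}\}\geq 0$, the upper-tail Chernoff--Hoeffding bound gives the analogous exponential estimate, and the same logarithm/square-root manipulation produces the $t$-th row of \eqref{retailer role in model concatanation geq}.

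Finally, I would stack these scalar inequalities over $t=0,\dots,\tau-1$ into the vector relations \eqref{retailer role in model concatanation leq}--\eqref{retailer role in model concatanation geq}, using the notation $\mu_{d_{j}}$ and $\mathbf{q}_{g,1},\mathbf{q}_{g,2}$ introduced in the statement. Because every step in the chain used only the union bound and one-sided tail inequalities --- each of which yields a \emph{sufficient} condition --- any $(\mathbf{u}_{1},\dots,\mathbf{u}_{N})$ satisfying the deterministic inequalities also satisfies the chance constraint \eqref{retailer constraint}, so the deterministic feasible set is a genuine under-approximation of the original one.

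The main obstacle, exactly as in Proposition~\ref{pro. Soc reformulation}, is the careful bookkeeping of the sign conditions: one must verify that the branch of the square root retained when solving the quadratic inequality is the one compatible with the sign assumption made to invoke Chernoff--Hoeffding, so that the resulting linear inequality \emph{implies} (rather than is merely implied by) the probabilistic bound. The remaining manipulations are the routine algebra already displayed in the proof of Proposition~\ref{pro. Soc reformulation}, specialized to a single time index, with $r^{k}$ replaced by $d_{j}^{t}$, $\mu_{r}^{k}$ by $\mathbf{E}\{d_{j}^{t}\}$, and $\nu_{r}^{t}$ by $\nu_{d}^{t}$.
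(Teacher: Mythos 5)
Your proposal follows essentially the same route as the paper's proof: split the two-sided constraint via the union bound with the parameter $\tilde{\delta}_{g}^{t}$, center $\sum_{j}d_{j}^{t}$ about its mean, impose the sign condition needed to invoke the one-sided Chernoff--Hoeffding bound, solve the resulting quadratic inequality keeping the branch compatible with that sign condition, and stack over $t$. The paper's proof is exactly this single-time-step specialization of the argument of Proposition~\ref{pro. Soc reformulation}, so your attempt is correct and matches it.
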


\begin{proof}
We know, that $g_{i}^{t} = d_{i}^{t} - u_{i}^{t}$ and $g^{t} = \sum_{j=1}^{N}g_{j}^{t} = \sum_{j=1}^{N} d_{j}^{t} - \sum_{j=1}^{N} u_{j}^{t}$. We also considered $\mathbf{Pr}\left\{ 0 \leq g^t \leq \bar{g}  \right\} \geq 1-\delta_{g}^{t}$ as retailer constraint. Now like what we have done in Proposition \ref{pro. Soc reformulation}, we have:
\begin{align*}
    %&g_{i}^{t} = d_{i}^{t} - u_{i}^{t}\\
    %&g^{t} = \sum_{j=1}^{N}g_{j}^{t} = \sum_{j=1}^{N} d_{j}^{t} - \sum_{j=1}^{N} u_{j}^{t} 
    &\mathbf{Pr}\left\{ 0 \leq g^t \leq \bar{g}  \right\} \geq 1-\delta_{g}^{t} \Longleftrightarrow\\&  
    \mathbf{Pr}\left\{ \left(0 \leq g^t \right) \bigcap \left( g^t \leq \bar{g} \right) \right\} \geq 1-\delta_{g}^{t} \Longleftrightarrow\\
    &1 - \mathbf{Pr}\left\{ \left(g^t \leq 0 \right) \bigcup \left( g^t \geq \bar{g} \right) \right\} \geq 1-\delta_{g}^{t}\Longleftrightarrow\\&
    \mathbf{Pr}\left\{ \left(g^t \leq 0 \right) \bigcup \left( g^t \geq \bar{g} \right) \right\} \leq \delta_{g}^{t} \Longleftrightarrow\\
    &\mathbf{Pr}\left\{ g^t \leq 0  \right\} + \mathbf{Pr}\left\{  g^t \geq \bar{g}  \right\} \leq \delta_{g}^{t}\Longleftarrow\\
    &\begin{cases}
    	\mathbf{Pr}\left\{ g^t \leq 0  \right\} \leq \tilde{\delta}_{g}^{t}\\
    	\mathbf{Pr}\left\{  g^t \geq \bar{g}  \right\} \leq \delta_{g}^{t} - \tilde{\delta}_{g}^{t}
    \end{cases}
\end{align*}
We know:
\begin{align*}
%&\mathbf{Pr}\left\{ g^t \leq 0  \right\} \leq \tilde{\delta}_{g}^{t}\\
&\mathbf{Pr}\left\{ g^t \leq 0  \right\} = \mathbf{Pr}\left\{ \sum_{j=1}^{N} d_{j}^{t} - \sum_{j=1}^{N}u_{j}^{t} \leq 0  \right\} = \\&\mathbf{Pr}\left\{ \sum_{j=1}^{N} d_{j}^{t} - \sum_{j=1}^{N}u_{j}^{t} - \sum_{j=1}^{N} \mu_{d_{j}}^{t}\leq - \sum_{j=1}^{N} \mu_{d_{j}}^{t}  \right\}=\\
&\mathbf{Pr}\left\{ \sum_{j=1}^{N} d_{j}^{t}  - \sum_{j=1}^{N} \mu_{d_{j}}^{t}\leq \sum_{j=1}^{N}u_{j}^{t} - \sum_{j=1}^{N} \mu_{d_{j}}^{t}  \right\}
\end{align*}
For identifying a sufficient condition that ensures $\mathbf{Pr}\left\{ g^t \leq 0  \right\} \leq \tilde{\delta}_{g}^{t}$ is satisfied, we use \textit{Chernoff-Hoeffding inequality}. For using \textit{Chernoff-Hoeffding inequality}, it is necessary the following condition satisfied:
\begin{align*}
    \sum_{j=1}^{N}u_{j}^{t} - \sum_{j=1}^{N} \mu_{d_{j}}^{t} \leq 0   
\end{align*}
If the aforementioned condition is met, we can, based on the Chernoff-Hoeffding inequality, state the following:
\begin{align*}
    &\mathbf{Pr}\left\{ \sum_{j=1}^{N} d_{j}^{t}  - \sum_{j=1}^{N} \mu_{d_{j}}^{t}\leq -\left( \sum_{j=1}^{N} \mu_{d_{j}}^{t} - \sum_{j=1}^{N}u_{j}^{t}\right)  \right\} \\&\leq \mathrm{exp}\left( \frac{-\left( \sum_{j=1}^{N} \mu_{d_{j}}^{t} - \sum_{j=1}^{N}u_{j}^{t}\right)^2}{\nu_{d}^{t} \sum_{j=1}^{N}\left( f_{j}^{t} - c_{j}^{t} \right)^2}  \right)
\end{align*}
So it is necessary to have:
\begin{align*}
    &\mathrm{exp}\left( \frac{-\left( \sum_{j=1}^{N} \mu_{d_{j}}^{t} - \sum_{j=1}^{N}u_{j}^{t}\right)^2}{\nu_{d}^{t} \sum_{j=1}^{N}\left( f_{j}^{t} - c_{j}^{t} \right)^2}  \right) \leq \tilde{\delta}_{g}^{t} \Longleftrightarrow\\&
    \frac{-\left( \sum_{j=1}^{N} \mu_{d_{j}}^{t} - \sum_{j=1}^{N}u_{j}^{t}\right)^2}{\nu_{d}^{t} \sum_{j=1}^{N}\left( f_{j}^{t} - c_{j}^{t} \right)^2} \leq \mathrm{Ln}\left( \tilde{\delta}_{g}^{t}  \right) \Longleftrightarrow\\
    &-\left( \sum_{j=1}^{N} \mu_{d_{j}}^{t} - \sum_{j=1}^{N}u_{j}^{t}\right)^2 \leq \nu_{d}^{t} \mathrm{Ln}\left( \tilde{\delta}_{g}^{t}  \right) \sum_{j=1}^{N}\left( f_{j}^{t} - c_{j}^{t} \right)^2 \Longleftrightarrow\\&
    \left( \sum_{j=1}^{N} \mu_{d_{j}}^{t} - \sum_{j=1}^{N}u_{j}^{t}\right)^2 \geq -\nu_{d}^{t} \mathrm{Ln}\left( \tilde{\delta}_{g}^{t}  \right) \sum_{j=1}^{N}\left( f_{j}^{t} - c_{j}^{t} \right)^2
\end{align*}
It is obvious that $-\nu_{d}^{t} \mathrm{Ln}\left( \tilde{\delta}_{g}^{t}  \right) \sum_{j=1}^{N}\left( f_{j}^{t} - c_{j}^{t} \right)^2$ is non-negative, so:
\begin{align*}
    &\left( \sum_{j=1}^{N} \mu_{d_{j}}^{t} - \sum_{j=1}^{N}u_{j}^{t}\right) \geq \sqrt{-\nu_{d}^{t} \mathrm{Ln}\left( \tilde{\delta}_{g}^{t}  \right) \sum_{j=1}^{N}\left( f_{j}^{t} - c_{j}^{t} \right)^2}\\
    &\mathrm{Or}\\
    &\left( \sum_{j=1}^{N} \mu_{d_{j}}^{t} - \sum_{j=1}^{N}u_{j}^{t}\right) \leq -\sqrt{-\nu_{d}^{t} \mathrm{Ln}\left( \tilde{\delta}_{g}^{t}  \right) \sum_{j=1}^{N}\left( f_{j}^{t} - c_{j}^{t} \right)^2}
\end{align*}
But $\left( \sum_{j=1}^{N} \mu_{d_{j}}^{t} - \sum_{j=1}^{N}u_{j}^{t}\right) \geq \sqrt{-\nu_{d}^{t} \mathrm{Ln}\left( \tilde{\delta}_{g}^{t}  \right) \sum_{j=1}^{N}\left( f_{j}^{t} - c_{j}^{t} \right)^2}$ is just valid. So, we consider:
\begin{align*}
   \sum_{j=1}^{N}u_{j}^{t}- \sum_{j=1}^{N} \mu_{d_{j}}^{t} +  \sqrt{-\nu_{d}^{t} \mathrm{Ln}\left( \tilde{\delta}_{g}^{t}  \right) \sum_{j=1}^{N}\left( f_{j}^{t} - c_{j}^{t} \right)^2} \leq 0
\end{align*}
Now we discuss similarly about $\mathbf{Pr}\left\{  g^t \geq \bar{g} \right\} \leq \delta_{g}^{t} - \tilde{\delta}_{g}^{t}$. We know:
\begin{align*}
    %&\mathbf{Pr}\left\{  g^t \geq \bar{g} \right\} \leq 1 - \delta_{g}^{t} - \tilde{\delta}_{g}^{t}\\
    &\mathbf{Pr}\left\{  g^t \geq \bar{g} \right\} = \mathbf{Pr}\left\{ \sum_{j=1}^{N}d_{j}^{t} - \sum_{j=1}^{N}u_{j}^{t} \geq \bar{g}  \right\} =\\&  \mathbf{Pr}\left\{ \sum_{j=1}^{N}d_{j}^{t} - \sum_{j=1}^{N}u_{j}^{t} -\sum_{j=1}^{N}\mu_{d_{j}}^{t} \geq \bar{g} -\sum_{j=1}^{N}\mu_{d_{j}}^{t} \right\}
    = \\& \mathbf{Pr}\left\{ \sum_{j=1}^{N}d_{j}^{t} - \sum_{j=1}^{N}\mu_{d_{j}}^{t} \geq \bar{g} -\sum_{j=1}^{N}\mu_{d_{j}}^{t} +\sum_{j=1}^{N}u_{j}^{t} \right\}
\end{align*}
For identifying a sufficient condition that ensures $\mathbf{Pr}\left\{  g^t \geq \bar{g} \right\} \leq \delta_{g}^{t} - \tilde{\delta}_{g}^{t}$ is satisfied, we use \textit{Chernoff-Hoeffding inequality}. For using \textit{Chernoff-Hoeffding inequality}, it is necessary the following condition satisfied:
\begin{align*}
    \bar{g} -\sum_{j=1}^{N}\mu_{d_{j}}^{t} +\sum_{j=1}^{N}u_{j}^{t} \geq 0
\end{align*}
If the aforementioned condition is met, we can, based on the Chernoff-Hoeffding inequality, state the following:
\begin{align*}
    &\mathbf{Pr}\left\{ \sum_{j=1}^{N}d_{j}^{t} - \sum_{j=1}^{N}\mu_{d_{j}}^{t} \geq \bar{g} -\sum_{j=1}^{N}\mu_{d_{j}}^{t} +\sum_{j=1}^{N}u_{j}^{t} \right\}\\&
    \leq \mathrm{exp}\left(  \frac{-\left( \bar{g} -\sum_{j=1}^{N}\mu_{d_{j}}^{t} +\sum_{j=1}^{N}u_{j}^{t}
 \right)^2}{\nu_{d}^{t}\sum_{j=1}^{N}\left( 
 f_{j}^{t} - c_{j}^{t}\right)^2}  \right)
\end{align*}
So it is necessary to have:
{\small
\begin{align*}
    &\mathrm{exp}\left(  \frac{-\left( \bar{g} -\sum_{j=1}^{N}\mu_{d_{j}}^{t} +\sum_{j=1}^{N}u_{j}^{t}
     \right)^2}{\nu_{d}^{t}\sum_{j=1}^{N}\left( 
     f_{j}^{t} - c_{j}^{t}\right)^2}  \right) \leq \delta_{g}^{t} - \tilde{\delta}_{g}^{t} \Longleftrightarrow \\
     & \frac{-\left( \bar{g} -\sum_{j=1}^{N}\mu_{d_{j}}^{t} +\sum_{j=1}^{N}u_{j}^{t}
     \right)^2}{\nu_{d}^{t}\sum_{j=1}^{N}\left( 
     f_{j}^{t} - c_{j}^{t}\right)^2}  \leq \mathrm{Ln} \left\{ \delta_{g}^{t} - \tilde{\delta}_{g}^{t}  \right\} \Longleftrightarrow \\
     & -\left( \bar{g} -\sum_{j=1}^{N}\mu_{d_{j}}^{t} +\sum_{j=1}^{N}u_{j}^{t}\right)^2 \leq \nu_{d}^{t} \mathrm{Ln} \left\{ \delta_{g}^{t} - \tilde{\delta}_{g}^{t}  \right\} \sum_{j=1}^{N}\left( 
     f_{j}^{t} - c_{j}^{t}\right)^2 \\&\Longleftrightarrow \\
     & \left( \bar{g} -\sum_{j=1}^{N}\mu_{d_{j}}^{t} +\sum_{j=1}^{N}u_{j}^{t}\right)^2 \geq -\nu_{d}^{t} \mathrm{Ln} \left\{ \delta_{g}^{t} - \tilde{\delta}_{g}^{t}  \right\} \sum_{j=1}^{N}\left( 
     f_{j}^{t} - c_{j}^{t}\right)^2     
\end{align*}}
It is obvious that $ -\nu_{d}^{t} \mathrm{Ln} \left\{ \delta_{g}^{t} - \tilde{\delta}_{g}^{t}  \right\} \sum_{j=1}^{N}\left( f_{j}^{t} - c_{j}^{t}\right)^2$ is non-negative, so:
\begin{align*}
   &\bar{g} -\sum_{j=1}^{N}\mu_{d_{j}}^{t} +\sum_{j=1}^{N}u_{j}^{t} \geq \sqrt{-\nu_{d}^{t} \mathrm{Ln} \left\{ \delta_{g}^{t} - \tilde{\delta}_{g}^{t} \right\} \sum_{j=1}^{N}\left( 
     f_{j}^{t} - c_{j}^{t}\right)^2 }\\
    &\mathrm{Or}\\
    &\bar{g} -\sum_{j=1}^{N}\mu_{d_{j}}^{t} +\sum_{j=1}^{N}u_{j}^{t} \leq -\sqrt{-\nu_{d}^{t} \mathrm{Ln} \left\{ \delta_{g}^{t} - \tilde{\delta}_{g}^{t}  \right\} \sum_{j=1}^{N}\left( 
     f_{j}^{t} - c_{j}^{t}\right)^2 }
\end{align*}
But $\bar{g} -\sum_{j=1}^{N}\mu_{d_{j}}^{t} +\sum_{j=1}^{N}u_{j}^{t} \geq \sqrt{-\nu_{d}^{t} \mathrm{Ln} \left\{ \delta_{g}^{t} - \tilde{\delta}_{g}^{t}  \right\} \sum_{j=1}^{N}\left( f_{j}^{t} - c_{j}^{t}\right)^2 }$ is just valid. So, we consider:
{\footnotesize
\begin{align*}
    -\bar{g} +\sum_{j=1}^{N}\mu_{d_{j}}^{t} -\sum_{j=1}^{N}u_{j}^{t} + \sqrt{-\nu_{d}^{t} \mathrm{Ln} \left\{ \delta_{g}^{t} - \tilde{\delta}_{g}^{t}  \right\} \sum_{j=1}^{N}\left( f_{j}^{t} - c_{j}^{t}\right)^2 } \leq 0
\end{align*}}
\end{proof}
All the inequalities in \eqref{Soc constraint Concatenate leq}--\eqref{Soc constraint Concatenate geq} and \eqref{deterministic form of reformulation last time SoC leq}%\eqref{deterministic form of reformulation last time SoC geq}, \eqref{retailer role in model concatanation leq}
--\eqref{retailer role in model concatanation geq} can be written as
\begin{equation}\label{coupling constraint}
    \sum_{j = 1}^{N} A \mathbf{u}_{j} \leq b,
\end{equation}
with appropriately defined matrices $A$ and $b$, like,
{\normalsize
\begin{align*}
A = 
\begin{bmatrix}
\rho\mathbf{M}_{\tau}\\
\dots \dots\\
-\rho\mathbf{M}_{\tau} \\
\dots \dots\\
\rho\mathbf{1}_{\tau}^{T}\\
\dots \dots\\
- \rho\mathbf{1}_{\tau}^{T}\\
\dots \dots\\
\mathbf{I}_{\tau} \\
\dots \dots\\
- \mathbf{I}_{\tau} \\
\end{bmatrix}
,
b = - 
\begin{bmatrix}
\left(\underline{x}-x^{0}\right)\mathbf{1}_{\tau} - \rho \mathbf{M}_{\tau} \mu_{r} + \mathbf{q}_{x,1}\\
\dots \dots\\
\left(x^{0} - \bar{x}\right)\mathbf{1}_{\tau} + \rho \mathbf{M}_{\tau} \mu_{r}+ \mathbf{q}_{x,2} \\
\dots\dots\\
r-\epsilon-x^{0} - \rho \mathbf{1}_{\tau}^{T} \mu_{r} +q_{x,\mathrm{final},1}
\\
\dots\dots\\
x^{0} - r - \epsilon + \rho \mathbf{1}_{\tau}^{T} \mu_{r} + +q_{x,\mathrm{final},2}
\\
\dots\dots\\
 -\sum_{j=1}^{N}\mu_{d_{j}} + \mathbf{q}_{g,1}\\
\dots\dots\\
 -\bar{g}\mathbf{1}_{\tau} + \sum_{j=1}^{N}\mu_{d_{j}} + \mathbf{q}_{g,2}
\end{bmatrix}\!,
\end{align*}
}
and with explicit local constraint or local decision set
\begin{equation}\label{local constraint}
    \Omega_{i} = \left\{  \mathbf{u}_{i} \in \mathbb{R}^{T}  | \mathbf{0}_{\tau}   \leq \mathbf{u}_{i}  \leq 
\bar{u}_{i}\mathbf{1}_{\tau}  \right\}.
\end{equation}

%\smallskip
\noindent\textbf{Reformulation of the cost function.}
We can rewrite the cost function \eqref{cost-pure} as
\begin{equation}
    J_{i} = \mathbf{u}_{i}^{T}  G \mathbf{u}_{i}  
    +T_{i} \mathbf{u}_{i} + \left[ \frac{1}{N}     \sum_{j=1}^{N} \mathbf{u}_{j}^{T}  \right] H \mathbf{u}_{i}  
    + c_{i} \ ,
\end{equation}
where
{\small
\begin{align*}
		G &= 
		\alpha^{dch} \mathbf{I}_{\tau}
		,\hspace{0.1cm} 
            H = 
		Nk_{c}^{N} \mathbf{I}_{\tau}
		,\hspace{0.1cm}
            \\
	    T_{i} &= \left(-K_{ToU} - k_{c}^{N} \mu_{d_{i}} - k_{c}^{N}\sum_{j =1}^{N}\mu_{d_{j}} + \beta^{dch}\mathbf{1}_{\tau}\right)^{T}
	    \\
        %\mu_{d_{i}^{t}} & \triangleq \mathbf{E}\{d_{i}^{t}\}\\
        c_{i} &= \mu_{d_{i}}^{T}K_{ToU}+\sum_{t=0}^{\tau -1}\Biggl[k_{c}^{N}\mathbf{E}\left\{\sum_{j=1}^{N}d_{i}^{t}d_{j}^{t}\right\}
        \\
        &+\alpha^{dch}\sum_{j=1,j\neq i}^{N}\left(u_{j}^{t}\right)^2
        + \left(-k_{c}^{N}\mu_{d_{i}^{t}}+\beta^{dch} \right)\sum_{j=1,j\neq i}^{N}u_{j}^{t}\Biggr].\\
        K_{ToU} & =  \left[K_{ToU}^{0},K_{ToU}^{1},\dots,K_{ToU}^{\tau-1}\right]^{T},\\
        \mu_{{r}^{t}} &  =  \mathbf{E} \left\{r^{t}\right\},  \mu_{ d_{i}^{t}}  =  \mathbf{E} \left\{d_{i}^{t}\right\}.
	\end{align*}   
}
{\normalsize 
The above under-approximations and reformulation of the cost function gives the following game $\mathcal{G'}$:
%It should be noted that $\mathcal{G'}$ exhibits a close relationship with $\mathcal{G}$.
}
{\small
     \begin{align*}
     %\label{Game setup Converted}
      \mathcal{G'} = \begin{cases}
        \textbf{Players:}\  \textrm{A set of residential agents}\  \mathcal{N} = \{1,2,\dots,N\}\\
        \textbf{Strategies of Agents:}\  \mathbf{u}_{i} \hspace{1cm}i\in \mathcal{N}\\
        \textbf{Cost Functions:} \ J_{i}\left(\mathbf{u}_{i},\mathbf{u}_{-i}\right) 
        \hspace{0.75cm} i \in \mathcal{N}\\
        \textbf{Constraints:}
        \begin{cases}
        \textbf{Local:}\ \eqref{local constraint}\\
        \textbf{Coupling\ (deterministic and static):}\ \eqref{coupling constraint} 
        \end{cases} 
        \end{cases}          
    \end{align*}
}
where $\mathbf{u}_{-i} = \mathrm{col}\left(\left\{\mathbf{u}_{j}\right\}_{j \neq i}\right)$. Each agent seeks to minimize its cost function, $J_{i}(\mathbf{u}_{i},\mathbf{u}_{-i})$, while taking into account both the coupling constraint \eqref{coupling constraint} and the local constraint \eqref{local constraint}.
\section{Game Theoretical Analysis}\label{sec.game-analysis}
%Based on the system model, we consider a micro-grid system consisting of $N$ non-cooperative agent that share a single battery. Furthermore, each agent utilizes renewable energy sources like solar energy from photovoltaic panels.\\
%Let us consider a household as an agent. We assume that each agent uses a known portion of renewable energy directly and stores the unknown portion in the shared battery.\\
%In this study, we face with a special form of coupling constraint known as an affine form.
Utilizing the coupling constraint and local decision set, we define the collective feasible set $\mathcal{U}$ as
%in $\mathbb{R}^{\tau N}$ as follows:
\begin{equation}
\mathcal{U}= \Omega \cap \left\{ (\mathbf{y}_{1},\dots,\mathbf{y}_{N}) \in \mathbb{R}^{\tau N} | \sum_{i = 1}^{N}A\mathbf{y}_{i}-b \leq 0\right\},
\end{equation}
where $\Omega = \Omega_{1} \times \Omega_{2} \times \dots \times \Omega_{N}$. Additionally, we define the feasible decision set of agent $i \in \left\{1,\dots,N\right\}$ as
\begin{equation*}
\mathcal{U}_{i}(\mathbf{u}_{-i}) = \left\{ \mathbf{y}_{i} \in \Omega_{i} | A\mathbf{y}_{i} \leq b - \sum_{j=1,j\neq i}^{N} A\mathbf{u}_{-i} \right\}.
\end{equation*}
Our local optimization problem can be formalized as a game theory setup:
\begin{equation}\label{game setup}
    \begin{cases}
    \mathrm{min}_{\mathbf{u}_{i} \in \Omega_{i}} J(\mathbf{u}_{i},\mathbf{u}_{-i})\\
    \mathrm{s.t.} \quad A\mathbf{u}_{i} \leq b - \sum_{j=1,j\neq i}^{N}A\mathbf{u}_{j}
    \end{cases}
    \, \textrm{for all}\  i\in \{1,\dots,N\}
\end{equation}
In this context, we use the notion of Generalized Nash instead of Nash, since the coupling among the agents is not only in their cost functions but also in their constraints.
\begin{definition}[Generalized Nash Equlibrium]
The collective strategy $\mathbf{u}^{*}$ is a generalized Nash equilibrium (GNE) of
the game in \eqref{game setup} if $\mathbf{u}^{*} \in \mathcal{U}$ and for all $i \in \{1,\dots,N\}$
\begin{equation*}
    J_{i}\left(  \mathbf{u}_{i}^{*},\mathbf{u}_{-i}^{*}   \right) \leq  \inf \left\{   J_{i}(\mathbf{y},\mathbf{u}_{-i}^{*})  |  \mathbf{y} \in \mathcal{U}_{i}(\mathbf{u}_{-i}^{*})  \right\}.
\end{equation*}
\end{definition}

\begin{remark}
\label{propetries of 1,2 in main paper}
In our study, we can easily demonstrate that for each $i \in \{1,\dots,N\}$ and $\mathbf{y} \in \mathcal{U}_{-i}$, the local cost function $J_{i}(\cdot, \mathbf{y})$ is strictly convex (since $G$ is positive definite) and continuously differentiable. Additionally, for each agent, its local decision set is non-empty, compact (i.e., closed and bounded), and convex (since it is a hypercube, see \eqref{local constraint}). 
\end{remark}
We also consider the following relax assumption.
\begin{assumption}
\label{slater assumotion}
The collective feasible set satisfies Slater's constraint qualification. 
\end{assumption}

%\red{add reference for Slater's constraint?}
\begin{proposition}
Under Assumption~\ref{slater assumotion}, the GNE exists in our proposed game setup.
\end{proposition}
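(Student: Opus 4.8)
The plan is to reduce the existence of a GNE to the solvability of a variational inequality on the collective feasible set $\mathcal{U}$, and then to use Slater's condition (Assumption~\ref{slater assumotion}) to certify that a solution of that variational inequality is in fact a GNE, i.e.\ to construct a variational (normalized) equilibrium. First I would record the structural facts about $\mathcal{G}'$: by Remark~\ref{propetries of 1,2 in main paper}, for each $i$ and each fixed $\mathbf{u}_{-i}$ the cost $J_i(\cdot,\mathbf{u}_{-i})$ is convex and continuously differentiable and each $\Omega_i$ is nonempty, compact and convex, so $\Omega=\prod_{i}\Omega_i$ is compact and convex; since the coupling constraint \eqref{coupling constraint} is affine, $\mathcal{U}$ is the intersection of the compact convex box $\Omega$ with a polyhedron, hence convex and compact, and it is nonempty by Assumption~\ref{slater assumotion}.

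Next I would introduce the pseudo-gradient map
\[
F(\mathbf{u})\triangleq\mathrm{col}\big(\nabla_{\mathbf{u}_1}J_1(\mathbf{u}_1,\mathbf{u}_{-1}),\ldots,\nabla_{\mathbf{u}_N}J_N(\mathbf{u}_N,\mathbf{u}_{-N})\big),
\]
which is affine, hence continuous, because every $J_i$ is quadratic. Since $\mathcal{U}$ is nonempty, convex and compact and $F$ is continuous, the problem $\mathrm{VI}(\mathcal{U},F)$ has a solution $\mathbf{u}^{*}\in\mathcal{U}$ by the classical existence theorem for finite-dimensional variational inequalities (a Brouwer fixed-point argument; see \cite{belgioioso2018projected}).

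Then I would upgrade this VI solution to a GNE. Writing the coupling set as $\{\mathbf{u}:\sum_{j}A\mathbf{u}_j-b\le 0\}$, which is polyhedral, and with $\Omega$ polyhedral and convex and Slater's condition in force, $\mathrm{VI}(\mathcal{U},F)$ admits a KKT representation: there is a multiplier $\lambda^{*}\ge 0$ attached to the coupling constraint with $(\lambda^{*})^{T}\big(\sum_{j}A\mathbf{u}_j^{*}-b\big)=0$ and, for every $i$, $(\mathbf{y}-\mathbf{u}_i^{*})^{T}\big(\nabla_{\mathbf{u}_i}J_i(\mathbf{u}^{*})+A^{T}\lambda^{*}\big)\ge 0$ for all $\mathbf{y}\in\Omega_i$. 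Since $A\mathbf{u}_i^{*}-\big(b-\sum_{j\ne i}A\mathbf{u}_j^{*}\big)=\sum_{j}A\mathbf{u}_j^{*}-b$, these relations are precisely the KKT conditions, with the same multiplier $\lambda^{*}$, of agent $i$'s parametric problem $\min\{J_i(\mathbf{y},\mathbf{u}_{-i}^{*})\mid\mathbf{y}\in\mathcal{U}_i(\mathbf{u}_{-i}^{*})\}$ evaluated at $\mathbf{y}=\mathbf{u}_i^{*}$ (note $\mathbf{u}_i^{*}\in\mathcal{U}_i(\mathbf{u}_{-i}^{*})$, so the latter set is nonempty). That problem is convex by Remark~\ref{propetries of 1,2 in main paper}, so its KKT conditions are sufficient for global optimality; hence $\mathbf{u}_i^{*}$ minimizes $J_i(\cdot,\mathbf{u}_{-i}^{*})$ over $\mathcal{U}_i(\mathbf{u}_{-i}^{*})$ for every $i$, and since $\mathbf{u}^{*}\in\mathcal{U}$ this means $\mathbf{u}^{*}$ is a GNE.

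The main obstacle is precisely this last step: passing from stationarity of the jointly-constrained VI to optimality of each agent's individual parametric problem. This is where the relaxed Assumption~\ref{slater assumotion} is genuinely used — a constraint qualification is needed to obtain the multiplier $\lambda^{*}$, and convexity of $J_i(\cdot,\mathbf{u}_{-i})$ is needed to turn the resulting first-order condition into global optimality. An alternative and essentially equivalent route, which I would at least mention, is to invoke Rosen's theorem on the existence of a normalized equilibrium directly, since $\mathcal{G}'$ has convex, continuously differentiable costs, compact convex local strategy sets, and a shared polyhedral coupling constraint.
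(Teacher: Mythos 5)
Your proof is correct, but it takes a genuinely different route from the paper's. The paper disposes of this proposition in one line by invoking the classical fixed-point existence theorem for jointly convex generalized games (\cite[Proposition~12.7]{palomar2010convex}, a Brouwer argument applied to the best-response correspondence), relying only on the convexity, compactness, and differentiability facts collected in Remark~\ref{propetries of 1,2 in main paper} together with Assumption~\ref{slater assumotion}. You instead prove the existence of a \emph{variational} GNE: solve $\mathrm{VI}(\mathcal{U},F)$ on the nonempty, compact, convex set $\mathcal{U}$, extract a shared multiplier $\lambda^{*}$ for the coupling constraint, and verify that the resulting per-agent KKT systems are sufficient for each agent's convex parametric problem. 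This is Rosen's normalized-equilibrium argument, and the details are sound --- in particular the decomposition of the VI over the product set $\Omega$ and the identity $A\mathbf{u}_{i}^{*}-\bigl(b-\sum_{j\neq i}A\mathbf{u}_{j}^{*}\bigr)=\sum_{j}A\mathbf{u}_{j}^{*}-b$, which lets complementary slackness transfer from the collective problem to each agent's problem. What your route buys is strictly more than the statement asks: it simultaneously establishes the existence half of Proposition~\ref{propos:exist and unique variational GNE} (which the paper handles separately via monotonicity of $F$) and makes explicit where the constraint qualification actually enters, namely in obtaining the common multiplier. What the paper's route buys is brevity and independence from the VI machinery, at the price of yielding only \emph{some} GNE rather than a variational one. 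One small refinement: since the coupling constraint and $\Omega$ are polyhedral, the KKT representation of the VI solution holds without Slater's condition (polyhedrality supplies the Abadie constraint qualification), so your use of Assumption~\ref{slater assumotion} in that step is sufficient but not strictly necessary; its essential role is to guarantee that $\mathcal{U}$ is nonempty.
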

\begin{proof}
Based on Remark~\ref{propetries of 1,2 in main paper}, Assumption \ref{slater assumotion} and by utilizing Brouwer's fixed point theorem \cite[Proposition 12.7]{palomar2010convex}, it is evident that a GNE exists. 
%\red{add reference for Brouwer's fixed point?}
%(\textcolor{blue}{Subsection I.C of SM gives further insights about Brouwer's fixed point theorem})
\end{proof}

%\blue{\cite{facchinei2010generalized}}
Our study aims to identify a subset of the GNE that exhibits favorable properties such as economic fairness and enhanced social stability. Typically, this subset can be determined by solving an appropriate variational inequality, known as a generalized variational inequality (GVI). However, since the local cost functions are continuously differentiable, we utilize a specific type of GVI from \cite{belgioioso2018projected}.
%(\textcolor{blue}{for further insights about generalized variational inequality, see subsection I.D of SM})
%
 Define the pseduo-gradient $F: \mathcal{U} \rightarrow \mathbb{R}^{\tau N}$ as
\begin{equation*}
F(\mathbf{u})= \mathrm{col}\left(     \left\{     \partial_{\mathbf{u}_{i}}J(\mathbf{u}_{i},\mathbf{u}_{-i})    \right\}_{i \in \{1,\dots,N\} }    \right),  
\end{equation*}
%\SES{add either comma or dot at the end of equations}
which is single-valued with $F(\mathbf{u}) = \Gamma \mathbf{u} + \Lambda$, where
%\begin{equation}
%F(\mathbf{u}) = \Gamma \mathbf{u} + \Lambda.
%\end{equation}
{\small
\begin{align*}
    \Gamma &= \mathbf{I}_{N \times N} \otimes \left[2\left(G + \frac{H^{T}}{N}\right)\right]
    + \left(\mathbf{1}_{N \times N} - \mathbf{I}_{N \times N}\right) \otimes \left( \frac{H^{T}}{N} \right) \\
    & = \left(2\alpha^{dch}+k_{c}^{N}\right)\left(\mathbf{I}_{N\times N} \otimes \mathbf{I}_{\tau}\right) + k_{c}^{N}\left(\mathbf{1}_{N\times N} \otimes \mathbf{I}_{\tau}\right)\\
    \Lambda &= 
    \begin{bmatrix}
        T_{1}&
        T_{2}&
        \dots&
        T_{N}
    \end{bmatrix}^{T}.
\end{align*}}
%\begin{remark}
    Based on Remark~\ref{propetries of 1,2 in main paper} and \cite[Proposition 12.4]{palomar2010convex}, any solution of standard variational inequality problem $\mathrm{VI}(\mathcal{U}, F)$ is a generalized Nash equilibrium of \eqref{game setup}, However, the converse is not necessarily true.
%\end{remark}

\begin{proposition} \label{propos:exist and unique variational GNE}
    The variational GNE exists and is unique.
\end{proposition}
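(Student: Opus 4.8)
The plan is to treat Proposition~\ref{propos:exist and unique variational GNE} as two standard facts about the variational inequality $\mathrm{VI}(\mathcal{U},F)$: a solution exists because $\mathcal{U}$ is nonempty, compact and convex while $F$ is continuous, and it is unique because the (affine) pseudo-gradient $F$ is strictly monotone on $\mathbb{R}^{\tau N}$. Since the text just before the statement already records that every solution of $\mathrm{VI}(\mathcal{U},F)$ is a GNE of \eqref{game setup}, establishing existence and uniqueness for this VI is exactly what is needed.

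For existence, I would first observe that $\mathcal{U}$ is convex, being the intersection of the box $\Omega=\prod_{i}\Omega_{i}$ with the finitely many closed half-spaces defined by $\sum_{i}A\mathbf{u}_{i}\le b$; it is closed and, since $\mathcal{U}\subseteq\Omega$ with each $\Omega_{i}$ a compact hypercube (see \eqref{local constraint}), it is bounded, hence compact; and it is nonempty by Assumption~\ref{slater assumotion}. As $F(\mathbf{u})=\Gamma\mathbf{u}+\Lambda$ is affine, it is continuous on $\mathcal{U}$. A classical existence theorem for variational inequalities over a nonempty compact convex set (a consequence of Brouwer's fixed point theorem, cf.\ \cite{palomar2010convex}) then yields a solution $\mathbf{u}^{*}$ of $\mathrm{VI}(\mathcal{U},F)$, which is therefore a variational GNE of \eqref{game setup}.

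For uniqueness, I would show that $F$ is strongly (in particular strictly) monotone. Because $G=\alpha^{dch}\mathbf{I}_{\tau}$ and $H=Nk_{c}^{N}\mathbf{I}_{\tau}$ are symmetric, $\Gamma$ is symmetric, so $(F(\mathbf{x})-F(\mathbf{y}))^{T}(\mathbf{x}-\mathbf{y})=(\mathbf{x}-\mathbf{y})^{T}\Gamma(\mathbf{x}-\mathbf{y})$, and it suffices to prove $\Gamma\succ0$. Writing $\Gamma=(2\alpha^{dch}+k_{c}^{N})(\mathbf{I}_{N\times N}\otimes\mathbf{I}_{\tau})+k_{c}^{N}(\mathbf{1}_{N\times N}\otimes\mathbf{I}_{\tau})$, the second summand is $k_{c}^{N}(\mathbf{1}_{N}\mathbf{1}_{N}^{T})\otimes\mathbf{I}_{\tau}\succeq0$ and the first is $(2\alpha^{dch}+k_{c}^{N})\mathbf{I}_{\tau N}\succ0$ since $\alpha^{dch},k_{c}^{N}>0$; a positive definite plus a positive semidefinite matrix is positive definite. (Equivalently, the Kronecker structure gives the spectrum of $\Gamma$ explicitly as $\{\,2\alpha^{dch}+k_{c}^{N},\ 2\alpha^{dch}+(N+1)k_{c}^{N}\,\}$, all strictly positive.) Hence $(F(\mathbf{x})-F(\mathbf{y}))^{T}(\mathbf{x}-\mathbf{y})\ge(2\alpha^{dch}+k_{c}^{N})\|\mathbf{x}-\mathbf{y}\|^{2}$, i.e.\ $F$ is $\eta$-strongly monotone with $\eta=2\alpha^{dch}+k_{c}^{N}$.

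Finally I would close with the routine argument that strict monotonicity forces at most one solution: if $\mathbf{u}_{1}^{*}$ and $\mathbf{u}_{2}^{*}$ both solve $\mathrm{VI}(\mathcal{U},F)$, testing each inequality at the other point and adding them yields $(F(\mathbf{u}_{1}^{*})-F(\mathbf{u}_{2}^{*}))^{T}(\mathbf{u}_{1}^{*}-\mathbf{u}_{2}^{*})\le0$, which by strict monotonicity forces $\mathbf{u}_{1}^{*}=\mathbf{u}_{2}^{*}$. Together with existence, this proves the variational GNE exists and is unique. The only non-mechanical step is the positive-definiteness of $\Gamma$, and I expect the main (mild) obstacle to be just bookkeeping — making sure the scalar constants collapse $\Gamma$ into the stated symmetric form so that the quadratic-form argument applies — rather than any substantive difficulty.
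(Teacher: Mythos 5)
Your proposal is correct and follows essentially the same route as the paper: both reduce the claim to existence and uniqueness for $\mathrm{VI}(\mathcal{U},F)$, get existence from the compact convex feasible set and continuity of the affine $F$ (the paper via \cite[Proposition 12.11]{palomar2010convex}, which packages the same Brouwer-based argument), and get uniqueness from strict monotonicity of $F$ by showing $\Gamma\succ 0$. The only cosmetic difference is how positive definiteness is verified --- you split $\Gamma$ into a positive definite plus a positive semidefinite summand, while the paper computes the eigenvalues $2\alpha^{dch}+k_{c}^{N}$ and $2\alpha^{dch}+(N+1)k_{c}^{N}$ directly from the commuting Kronecker factors; both are valid and you even note the spectral computation in passing.
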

\begin{proof}
%In \SSY{proposition 12.11 of} \cite{palomar2010convex},
It has been demonstrated in \cite[Proposition 12.11]{palomar2010convex} that under all the system model properties stated in Remark~\ref{propetries of 1,2 in main paper} and Assumption \ref{slater assumotion}, a sufficient condition for the existence (uniqueness) of a variational GNE in our game setup \eqref{game setup} is $F$ being monotone (strictly monotone). The definition of monotonicity (strict monotonicity) implies that if $\Gamma$ is positive semidefinite (positive definite), then $F$ is monotone (strictly monotone). 
%\textcolor{blue}{(In subsection I.B of SM, a list of definitions pertaining to different kinds of monotonicity can be found)}
Since $\left(2\alpha^{dch}+k_{c}^{N}\right)\left(\mathbf{I}_{N\times N} \otimes \mathbf{I}_{\tau}\right)$ and $k_{c}^{N}\left(\mathbf{1}_{N\times N} \otimes \mathbf{I}_{\tau}\right)$ are two commuting square matrices, then based on \cite[Theorem 1.3.12]{horn2012matrix}, the  eigenvalues of $\Gamma$ are either $(2\alpha^{dch} + k_{c}^{N} + NK_{c}^{N})$ or $(2\alpha^{dch}+k_{c}^{N})$, both
strictly positive, so $\Gamma$ is positive definite and $F$ is also strictly monotone.
\end{proof}

\begin{remark}\label{remark.strongmonotone}
    $F$ is $\zeta$-strongly monotone with $\zeta \in \left(  0, 2\alpha^{dch} +k_{c}^{N}\right)$. This can be demonstrated by utilizing the definition of strong monotonicity, in a manner analogous to Proposition ~\ref{propos:exist and unique variational GNE}. %\textcolor{blue}{The details regarding the $\zeta$-strongly monotonicity of $F$, can be found in subsections III.C  of the SM.}
\end{remark}

\begin{remark}
\label{remark.lipschitz}
Based on Lipschitz definition and Gershgorin Theorem, $F$ is $\mathit{l_{F}}$-Lipschitz with $\mathit{l_{F}}  > 2\alpha^{dch} + \left( N+1\right)k_{c}^{N}$.
\end{remark}

%If the existence of the variational GNE is ensured,
As mentioned in \cite{belgioioso2018projected}, projected-gradient algorithms for generalized equilibrium seeking in
aggregative games are preconditioned forward-backward methods. 
Based on this, we propose  a semi-decentralized preconditioned forward-backward algorithm to solve variational inequality problem VI($\mathcal{U},F$),
%(or from another point of view it solves the dual of \eqref{game setup})  
%based on the  centralized algorithm introduced in \blue{\cite{belgioioso2018projected}}, 
 as presented in Algorithm~\ref{alg.nashcomputation}. %Preconditioned forward-backward splitting method facilitates decentralized computing in addressing the coupled constraints prevalent in GNE problems \cite{belgioioso2018projected}. Moreover, this method accommodates a broader range of step-sizes compared to gradient ascent-descent algorithms, enhancing convergence.}
In the proposed algorithm, we define $\alpha_{i}$ as real values that belong to the interval $\left(0,\frac{2\zeta}{l_{F}^{2}}\right)$ for all $i \in \{1,\dots,N\}$ and $\gamma \in\left(0,\gamma_{max}\right)$. We set $\gamma_{max} \triangleq \frac{1}{\| \mathbf{1}_{N}^{T} \otimes A \|^{2}}\left[\frac{1}{\alpha_{i,max}}-\frac{1}{2\frac{\zeta}{l_{F}^{2}}}  \right]$.

\begin{algorithm}
\caption{Preconditioned Forward Backward}
\label{alg.nashcomputation}
\SetAlgoLined
\DontPrintSemicolon
\hspace{-0.3cm}\small{\textbf{Initialization:}}\ \footnotesize{ $k \gets 1$, $\mathbf{u}_{i}^{1} \gets \mathbf{u}_{i}^{\mathrm{init}}$, $\lambda^{1} \gets \lambda^{\mathrm{init}}$} \\
\normalsize
\hspace{-0.3cm}\small{\textbf{Repeat}}
\vspace{-0.1cm}
\scriptsize{
\begin{align*}
    &\hspace{0.15cm}\hspace{-0.3cm}{\footnotesize{\textbf{Agents:}}}\hspace{0.1cm}\forall i \in \{1,\dots,N\}\\ &\hspace{0.3cm}\hspace{-0.3cm}\mathbf{u}_{i}^{k+1} \gets \mathrm{proj}_{\Omega_{i}} \left[ \left(  \mathbf{I}_{\tau \times \tau}-2\alpha_{i}\left(G+\frac{H^T}{N}\right)   \right)\mathbf{u}_{i}^{k} - \alpha_{i}A^{T} 
    \lambda^{k} - \alpha_{i}T_{i}^{T}   \right]\\
    &\hspace{0.15cm}\hspace{-0.3cm}{\footnotesize{\textbf{Coordinator:}}}\\
    &\hspace{0.3cm}\hspace{-0.3cm}\lambda^{k+1} \gets \mathrm{proj}_{\mathbb{R}_{\geq 0}^{m}} \left[ 
  \lambda^{k} +\gamma \left(2\ A \sum_{j=1}^{N}\mathbf{u}_{j}^{k+1}-A\sum_{j=1}^{N}\mathbf{u}_{j}^{k}-b   \right) \right]\\
  %&\hspace{0.6cm}\sigma^{k+1} \gets \sum_{j=1}^{N} \mathbf{u}_{j}^{k+1}\\
  &\hspace{0.15cm}\hspace{-0.3cm}k  \gets k+1
\end{align*}
}
\normalsize{\hspace{-0.2cm}
\hspace{-0.3cm}\small{\textbf{Until}}}
\scriptsize{{\footnotesize{for all agents}} $\|\mathbf{u}_{i}^{k+1} - \mathbf{u}_{i}^{k}\| \leq \epsilon_{u}^{\mathrm{stop}}$ , $\|\lambda^{k+1} - \lambda^{k}\| \leq \epsilon_{\lambda}^{\mathrm{stop}}$}
\end{algorithm}

Moreover, we denote $\alpha_{i,min}$ and $\alpha_{i,max}$ as the minimum and maximum values of $\alpha_{i}$ across all $i \in \{1,\dots,N\}$ respectively, i.e., $\alpha_{i,min} \triangleq \min_{i \in \{1,\dots,N\}}\alpha_{i}$ and $\alpha_{i,max} \triangleq \max_{i \in \{1,\dots,N\}}\alpha_{i}$.
In this context, the parameters $l_{F}$ and $\zeta$ are determined according to Remark~\ref{remark.strongmonotone} and Remark~\ref{remark.lipschitz}, respectively.

The sequence $\left(\mathrm{col}(\mathbf{u}^{k}, \lambda^{k})\right)_{k=0}^{\infty}$defined by Algorithm 1, with step sizes $\alpha_{i} \in \left(0,\frac{2\zeta}{l_{F}^{2}}\right)$, for all $i \in \{1,\dots,N\}$, and $\gamma \in \left(0,\gamma_{max}\right)$, with $\gamma_{max} \triangleq \frac{1}{\| \mathbf{1}_{N}^{T} \otimes A \|^{2}}\left[  \frac{1}{\alpha_{i,max}}-\frac{1}{2\frac{\zeta}{l_{F}^{2}}}  \right]$
, globally converges to variational GNE based on \cite[Theorem 1]{belgioioso2018projected}. More details about the Algorithm are expressed in Appendix. 
\section{Simulation}\label{sec.simulation}
%\begin{example}\label{example.1}
In this section, we consider a microgrid system consisting of $N = 20$ identical residential users. 
%
%Each agent's cost function is specified as per Equation \eqref{cost-pure}. In this example, we have paid close attention to maintaining the health of the battery in good condition because we have only one battery in the system, and we want to prolong its lifetime. So,
The parameters of the cost functions in \eqref{cost-pure} are assigned as $\alpha^{dch} = 8$, $\beta^{dch} = 10$, and $k_c^{N} = 0.015$. Additionally, the conventional time-of-use pricing tariff is provided in Table~\ref{table.tariff}. The initial SoC is $x_0 = 0.5$ with minimum and maximum SoC of $\underline{x} = 0.1$ and $\bar{x} = 0.9$. The battery has a total energy capacity of $E = 20000$ units with efficiency $\eta = 1/E$. The upper bound on the discharging power for each user $i$ is $\bar{u}_{i} = E/\left(N\Delta t\right)$.
%This limit is enforced on the decision variables $u_{i}^{t}$ at each time step $t$.
In the chance constraints \eqref{Soc constraint}, \eqref{reformulation last time SoC}, and \eqref{retailer constraint}, we set $\delta_{x}^{t} = 0.8$, $\delta_{x}^{\tau,\mathrm{final}} =  0.9$, $r = 0.6$, $\epsilon = 0.05$, $\bar{g} = 600$ and $\delta_{g}^{t} = 0.8$.
The time horizon is $\tau = 24$ hours with time step $\Delta t = 1$ hour.
We consider $d_{i}^{t}$ and $r^t$ to have a bounded support with 25\% deviation around their mean value.
In our configuration, we set $\tilde{\delta}_{x}^{t} = \tilde{\delta}_{x}^{\tau,\mathrm{final}} = \tilde{\delta}_{g}^{t} = 0.05$ and $\nu_{r}^{t} = \nu_{d}^{t} = 1$.
For Algorithm~\ref{alg.nashcomputation}, $l_F = 16.31$, $\gamma = 5.33 \times 10^{-4}$, and $\zeta =4.24$. Additionally, $\alpha =10^{-4} \mathbf{diag}\{p\} \otimes( \mathbf{I}_{\footnotesize{\tau}\times\footnotesize{\tau}})$,
%\SES{make the vector horizontal.}
where
%$g^{t}$
$p= [251,311,317,172,131,70,28,152,129,74,32,121,80,223,\\35,71,292,33,251,183]$.

\begin{table}
	\caption{conventional time-of-use pricing tariff.}
	\begin{tabular}{|l|c|c|c|c|c|}
		\hline
		\textbf{Time(t)} & $0-4$ & $5-14$ & $15-16$ & $17-21$ & $22-24$ \\
		\hline
		\textbf{Tariff}($K_{ToU}^{t}$) & 29.45 & 30 & 29.5 & 30.5 & 29.5\\
		\hline
	\end{tabular}
	\label{table.tariff}
 %\vspace{-0.4cm}
\end{table}

\begin{figure}
    \centering
	\includegraphics[width=1\columnwidth]{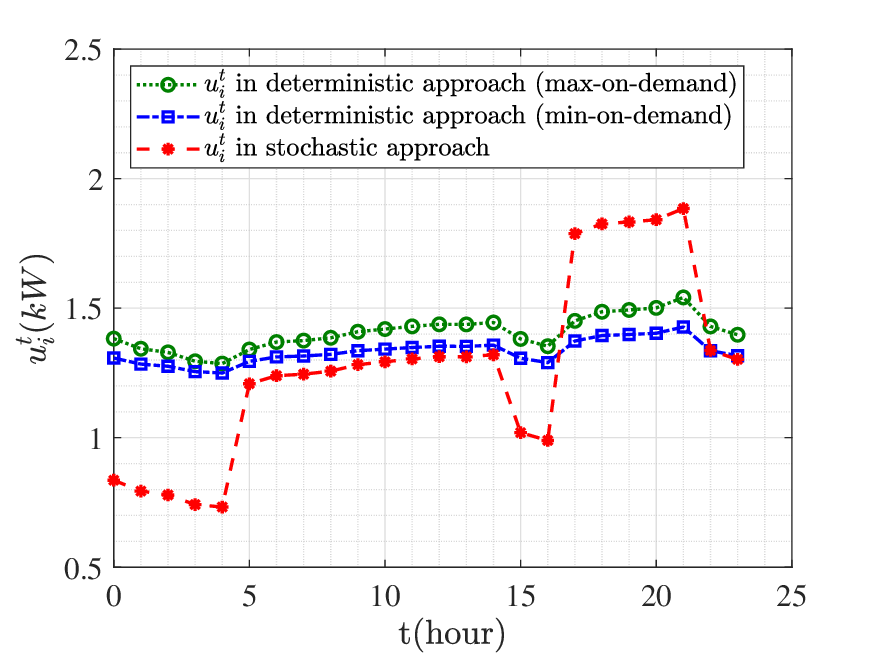}
    \caption{Profile of $u_{i}^{t}$ in stochastic and deterministic approaches.}
    \label{fig:discharging of shared battery in stochastic and deterministic approaches}
\end{figure}

\begin{figure}
    \centering
\includegraphics[width=1\columnwidth]{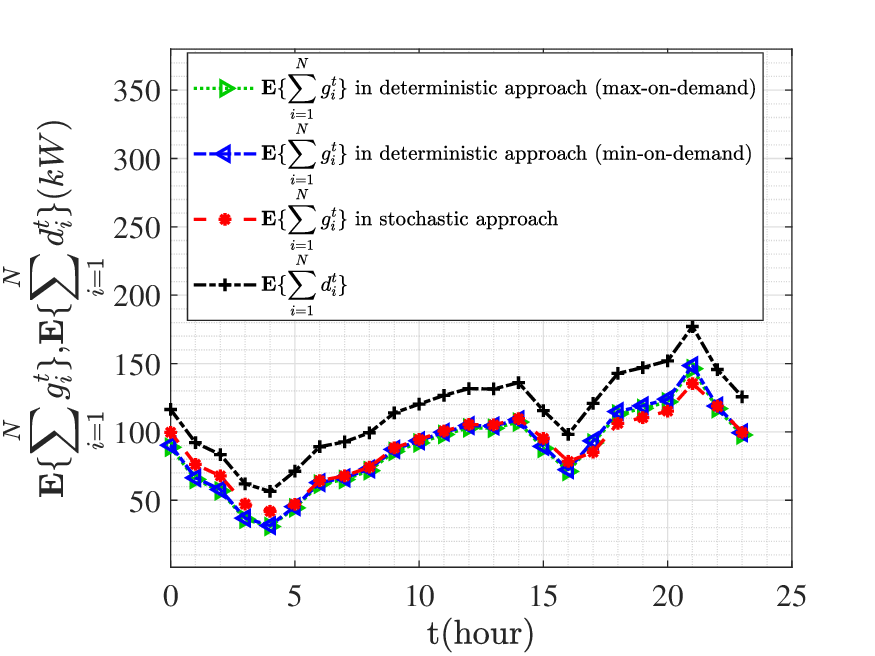}
    \caption{Profile of power exchange of all the agents with the grid in stochastic and deterministic approaches}
    \label{fig:Profile of power exchange of the agent with the grid}
\end{figure} 
According to Proposition~\ref{propos:exist and unique variational GNE}, the variational generalized Nash equilibrium is unique in our stochastic approach. We apply our stochastic method for DSM and compare its results with two deterministic methods. In these deterministic methods, we consider two worst-case scenarios for agent demand (lower bound and upper bound of demands) and use the mean value of renewable energy at each time as its deterministic value.

The profile of $u_{i}^{t}$ in the stochastic and deterministic approaches can be viewed in Figure~\ref{fig:discharging of shared battery in stochastic and deterministic approaches}. This figure illustrates that in periods when prices and demand are relatively low, the stochastic approach utilizes less battery energy compared to the deterministic methods. Conversely, during intervals with higher demand and electricity prices, the battery is employed to a greater extent under the stochastic method. 
Figure \ref{fig:Profile of power exchange of the agent with the grid} illustrates the means of power exchange profile of the agents with the grid.
The results in Figure \ref{fig:Profile of power exchange of the agent with the grid} indicate that the stochastic approach of this paper performs peak shaving more effectively than the deterministic methods. Simulation results also reveal that, while the stochastic approach achieves more effective peak shaving than the deterministic methods, the power exchange profile of the agents with the grid increases more frequently compared to the deterministic approaches. We also compare, numerically, the cost function of our stochastic approach with the deterministic methods for 1,000 random demand values and display the histogram in Figure \ref{fig:histogram comparing the random costs of stochastic and deterministic approaches}. As evident from the figure, the expected cost for our stochastic approach is lower, showing that it incurs lower costs for agents.

\begin{figure}
    \centering
	\includegraphics[width=1\columnwidth]{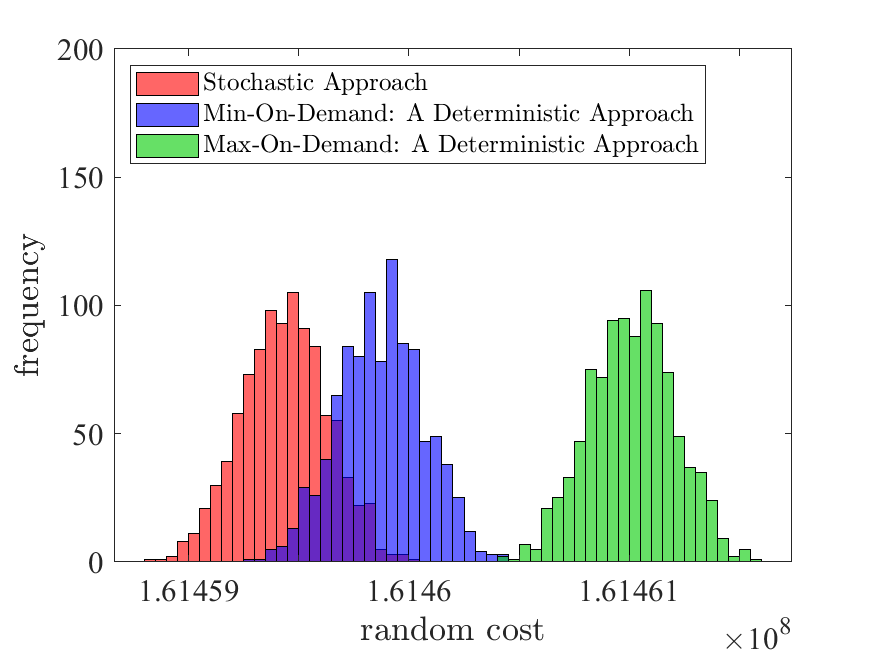}
    \caption{Histogram comparing the random costs of stochastic and deterministic approaches.% (the latter based on worst-case demand scenarios).
    }
    \label{fig:histogram comparing the random costs of stochastic and deterministic approaches}
    \vspace{-0.6cm}
\end{figure}

\section{CONCLUSIONS}
\label{sec.conclusion}
This paper studied a microgrid system where residential users use a shared battery charged through renewable sources and the grid. The model considered uncertain variables including user demand and renewable energy, as well as dynamic and stochastic coupling constraints. The paper used game theory to analyze the Nash equilibrium for demand-side management, examining existence and uniqueness. A semi-decentralized algorithm for Nash seeking was also proposed.
The simulation results demonstrated the advantages of the proposed setup.

\bibliographystyle{ieeetr}
\bibliography{refNew}

%\newpage

%\addtolength{\textheight}{-12cm}   % This command serves to balance the column lengths
                                  % on the last page of the document manually. It shortens
                                  % the textheight of the last page by a suitable amount.
                                  % This command does not take effect until the next page
                                  % so it should come on the page before the last. Make
                                  % sure that you do not shorten the textheight too much.

%%%%%%%%%%%%%%%%%%%%%%%%%%%%%%%%%%%%%%%%%%%%%%%%%%%%%%%%%%%%%%%%%%%%%%%%%%%%%%%%
%%%%%%%%%%%%%%%%%%%%%%%%%%%%%%%%%%%%%%%%%%%%%%%%%%%%%%%%%%%%%%%%%%%%%%%%%%%%%%%%
%%%%%%%%%%%%%%%%%%%%%%%%%%%%%%%%%%%%%%%%%%%%%%%%%%%%%%%%%%%%%%%%%%%%%%%%%%%%%%%%
\section*{APPENDIX}
\section*{Some Explanation about Proof of Game}
In the section dedicated to game analysis, it is necessary to demonstrate that function $F$ possesses certain characteristics (monotone, strictly monotone, $\eta$-strongly monotone and $l_{f}-\mathrm{Lipschitz}$). Subsequent sections will establish the equivalency of these properties with other concepts in linear algebra (Positive definitive or Positive semi-definite), and the corresponding results.
\subsection{The Monotonicity of $F$}
We start with definition of monotonticity.
\begin{align*}
    &\left(  F(u) - F(v)  \right)^{T}(u-v) \geq 0 \Longleftrightarrow
    \left(  \Gamma u - \Gamma v  \right)^{T}(u-v) \geq 0 \\&\Longleftrightarrow
    \left(  \Gamma \left(u -  v \right) \right)^{T}(u-v) \geq 0 \Longleftrightarrow\\&
    \left(u -  v \right)^T \left( \Gamma^{T} \right)\left(u -  v \right) \geq 0 
    \Longleftrightarrow \left(u -  v \right)^T \left( \Gamma  \right)\left(u -  v \right) \geq 0
\end{align*}
We have ascertained that for $F$ to be monotone, the necessary and sufficient condition is that $F$ must be positive semi-definite. Further, we are also endeavoring to pinpoint conditions leading to $\Gamma$ being positive semi-definite. 
\subsection{strictly Monotonicity of $F$}
We start with definition of monotonticity.
\begin{align*}
    &\left(  F(u) - F(v)  \right)^{T}(u-v) > 0 \Longleftrightarrow
    \left(  \Gamma u - \Gamma v  \right)^{T}(u-v) > 0 \\&\Longleftrightarrow
    \left(  \Gamma \left(u -  v \right) \right)^{T}(u-v) > 0 \Longleftrightarrow\\&
    \left(u -  v \right)^T \left( \Gamma^{T} \right)\left(u -  v \right) > 0 \Longleftrightarrow \left(u -  v \right)^T \left( \Gamma  \right)\left(u -  v \right) > 0
\end{align*}
We have ascertained that for $F$ to be strictly monotone, the necessary and sufficient condition is that $F$ must be positive definite. Further, we are also endeavoring to pinpoint conditions leading to $\Gamma$ being positive definite.
\subsection{$\zeta$-strongly monotonicity of $F$}
We start with definition of $\zeta$-strongly monotone of $F$.
\begin{align*}
    &\left(  F(u) - F(v)  \right)^{T}(u-v) \geq \zeta \| u-v \|^2 \Longleftrightarrow \\&
    \left(  \Gamma u - \Gamma v  \right)^{T}(u-v) \geq \zeta ( u-v )^{T}(u-v) \Longleftrightarrow\\
    &\left(  \Gamma \left(u -  v \right) \right)^{T}(u-v) \geq \zeta ( u-v )^{T}(u-v) \Longleftrightarrow\\&
    \left(u -  v \right)^T \left( \Gamma^{T} -\zeta I  \right)\left(u -  v \right) \geq 0 
    \Longleftrightarrow \\&\left(u -  v \right)^T \left( \Gamma -\zeta I  \right)\left(u -  v \right) \geq 0
\end{align*}
Now we try to find $\zeta$ such that $\Gamma -\zeta I$ becomes positive semi-definite.
\subsection{$l_{f}$-Lipschitz Continuity of $F$}
We start with definition of $l_{f}$-Lipschitz of $F$.
\begin{align*}
    &\|F(u) - F(v)\| \leq l_{f}\|u-v\| \Longleftrightarrow\\&
    \|\Gamma u - \Gamma v\| \leq l_{f}\|u-v\| \Longleftrightarrow
    \|\Gamma u - \Gamma v\|^2 \leq l_{f}^2\|u-v\|^2 \Longleftrightarrow\\
    &\|\Gamma (u - v)\|^2 \leq l_{f}^2\|u-v\|^2 \Longleftrightarrow \\&
    \left( u - v \right)^{T} \Gamma^{T}\Gamma \left( u - v \right) \leq l_{f}^2\left( u - v \right)^{T}\left( u - v \right) \Longleftrightarrow   \\
    &\left( u - v \right)^{T} \left(l_{f}^2I - \Gamma^{2}\right) \left( u - v \right) \geq 0
\end{align*}
Now we try to find $l_{f}$ such that $l_{f}^2I - \Gamma^{2}$ becomes positive semi-definite.
First, we find $\Gamma^2$:
{\small
\begin{align*}
    %\Gamma^2 = &\left\{ I_{N \times N} \otimes \left[ 2\left( G + \frac{H^T}{N}   \right) \right] +  \left( \mathbf{1}_{N \times N} - I_{N \times N}  \right) \otimes \frac{H^T}{N}\right\} \\&
    \Gamma^2 = & \left\{ I_{N \times N} \otimes \left[ 2\left( G + \frac{H^T}{N}   \right) \right] +  \left( \mathbf{1}_{N \times N} - I_{N \times N}  \right) \otimes \frac{H^T}{N}\right\}^2 \\
    = & I_{N \times N} \otimes \left[ 4 \left( \left( \alpha^{dch}+k_{c}^{N} \right)^2 I_{\tau}\right) \right] \\+& \left( \mathbf{1}_{N \times N} - I_{N \times N} \right) \otimes \left[ 2  \left( \alpha^{dch}+k_{c}^{N} \right)k_{c}^{N} I_{\tau} \right]\\
     + & \left( \mathbf{1}_{N \times N} - I_{N \times N} \right) \otimes \left[ 2k_{c}^{N}I_{\tau} \left( \left(  k_{c}^{N} +\alpha^{dch}\right) I_{\tau}\right) \right]\\
     +& \left[ \left( N-2 \right)\mathbf{1}_{N \times N} + I_{N \times N}   \right]\otimes \left( \left(k_{c}^{N}\right)^{2} I_{\tau}\right)
\end{align*}}
{\normalsize
The diagonal elements of $\Gamma^2$ are defined by the expression $4\left(\alpha^{dch} + k_{c}^{N}\right)^2 + \left( N- 1 \right)\left(k_{c}^{N}\right)^2$. Moreover, the summation of the absolute values of off-diagonal elements within a row is given by $4k_{c}^{N}(\alpha^{dch} + k_{c}^{N})(N-1) + (N-2)(N-1)\left(k_{c}^{N}\right)^2$. Therefore, according to the Gershgorin Circle Theorem, the following conditions are necessary.}
{\small
\begin{align*}
    &l_{F}^{2} - 4\left(\alpha^{dch}+ k_{c}^{N}\right)^2 - (N-1)\left(k_{c}^{N}\right)^2 >0
\end{align*}
{\normalsize So}
\begin{align*}
\mathit{l_{F}}  > &\Biggl\{ 4 \left( \alpha^{dch} + k_{c}^{N} \right)^{2} \\& +\left(N-1\right)\left[\left(N-1\right)\left(k_{c}^{N}\right)^{2}+4k_{c}^{N}\left(\alpha^{dch}+k_{c}^{N}\right)\right]\Biggr\}^{\frac{1}{2}}\\
& = 2\alpha^{dch}+(N+1)k_{c}^{N}
\end{align*}}
\section*{Details about algorithm}
In this paper we exploited preconditioned forward-backward splitting
method to compute the generalized Nash equilibria of the game \cite{belgioioso2018projected}. 
%In \cite{belgioioso2018projected}, the authors aimed at a unifying convergence analysis for projected-gradient algorithms that are adopted for the computation of generalized Nash equilibria in aggregative games. Specifically, it has been  adopted a general perspective based on monotone operator theory \cite{bauschke2017correction} to show that projected gradient algorithms with sequential updates belong to the class of preconditioned forward-backward splitting methods, introduced in \cite{yi2017distributed} for multi-agent network games. The proposed design is based not only on the splitting of the monotone operator whose zeros are the game equilibria, but also on the choice of the socalled preconditioning matrix, which induces the quadratic norm adopted to show global convergence of the resulting algorithm.
Since the convergence characterization of the forward-backward splitting method is well established, the advantage of the proposed design is that global convergence follows provided that some mild monotonicity assumptions on the problem data are satisfied.\\
In the classical Nash Equilibrium concept, each player \(i\) chooses their strategy \(u_i\) to minimize their individual cost function \(J_i(u_i, u_{-i})\), where \(u_{-i}\) denotes the strategies of all other players. In the GNE, there's an added complexity: the players' strategy choices are also subject to some shared/coupled constraints. By writing the KKT condition for finding GNE, this setup leads to a monotone inclusion problem of the form:
\[
0 \in \mathcal{P}(w) + \mathcal{Q}(w)
\]
where $w= \begin{bmatrix}
      \mathbf{u}\\
      \lambda\end{bmatrix}$, that \(\lambda\) is Lagrangian multiplier in the KKT conditions. Moreover,
\begin{align*}
  &\mathcal{P} : \begin{bmatrix}
      \mathbf{u}\\
      \lambda
  \end{bmatrix}\longrightarrow 
  \begin{bmatrix}
      F(\mathbf{u})\\
      b
  \end{bmatrix},\\
  &\mathcal{Q} : \begin{bmatrix}
      \mathbf{u}\\
      \lambda
  \end{bmatrix}\longrightarrow
  \begin{bmatrix}
      \mathbf{N}_{\Omega}(\mathbf{u})\\
      \mathbf{N}_{\mathbb{R}^{m}_{\geq 0}}(\mu)
  \end{bmatrix}+
  \begin{bmatrix}
      0 & (\mathbf{1}_{N}^{T} \otimes A)^{T}\\
      -(\mathbf{1}_{N}^{T} \otimes A) & 0      
  \end{bmatrix}\begin{bmatrix}
      \mathbf{u}\\
      \lambda
  \end{bmatrix}.  
\end{align*}
that \(\mathbf{N}_{S} \) denotes normal cone operator for set \(S\). \(\mathcal{P}\) and \(\mathcal{Q}\) are monotone operators contains the game's cost functions and constraints. In Remark 4 and Remark 5, we establish sufficient conditions under which 
\(F\) is \(\zeta-\mathrm{strongly\  monotone}\) and 
\(l_{f}-\mathrm{Lipschitz} \) . If \(F\) satisfied these properties, we can also show, similar to Lemma 1 in \cite{belgioioso2018projected}, that \(\mathcal{Q}\) is maximally monotone and \(\mathcal{P}\) is \((\zeta/l_{f}^{2})\)-cocoercive.\\
This monotone inclusion problem represents the fixed-point problem associated with finding the GNE. Let \(\mathrm{zer}(\mathcal{P}+\mathcal{Q}) = \{w \in \mathbb{R}^{Nn+m} \mid  0 \in (\mathcal{P}+\mathcal{Q})(w)\}\). Similarly, define \(\mathrm{fix}(V_{\phi}\circ  U_{\phi}) = \{w \in \mathbb{R}^{Nn+m} \mid  w = (V_{\phi}\circ  U_{\phi})(w)\}\). Furthermore, let \(V_{\phi} = (\mathrm{Id}-\phi^{-1}\mathcal{P})\) and \(U_{\phi} = (\mathrm{Id}+\phi^{-1}\mathcal{Q})^{-1}\), where \(\mathrm{Id(.)}\) denotes the identity operator and, positive definite matrix \(\phi\) is preconditioning matrix.
In Lemma 2 of \cite{belgioioso2018projected}, it has been shown that:\\
\begin{equation}
\label{Banach-Picard Eq}
w \in \mathrm{zer}(\mathcal{P} + \mathcal{Q}) \Longleftrightarrow w \in  \mathrm{fix} (V_{\phi}\circ  U_{\phi}),    
\end{equation}

The forward backward algorithm is the Banach–Picard iteration \cite{bauschke2017correction} applied to the mappings  \(V_{\phi}\circ  U_{\phi}\) in \(w \in  \mathrm{fix} (V_{\phi}\circ  U_{\phi})\) , i.e.,
\begin{align}
\label{iteration algorithm}
w^{k+1} = (\mathrm{Id}+ \phi^{-1}\mathcal{Q})^{-1}\circ (\mathrm{Id} - \phi^{-1}\mathcal{P})(w^k)  
\end{align}
In numerical analysis, \(U_{\phi}\) represents a forward step with size and direction defined by \(\phi\), while \(V_{\phi}\) represents a backward step. Directly from the iteration in \eqref{Banach-Picard Eq}, we have that
\begin{align}
\label{Banach-Picard Eq reformulated}
    &(\mathrm{Id}-\phi^{-1}\mathcal{P})(w^k) \in (Id + \phi^{-1}\mathcal{Q})(w^{k+1})\nonumber\\&\Longleftrightarrow -\mathcal{P}(w^k) \in \mathcal{Q}(w^{k+1})+\phi^{-1}(w^{k+1} - w^{k})
\end{align}
The choice of the preconditioning matrix \(\phi\) in \eqref{Banach-Picard Eq reformulated} plays
a key role in the algorithm design (it is important in convergence of our GNE seeking algorithm), and is set base on Theorem 1 in \cite{belgioioso2018projected}. So, We consider $\phi$ like what considered in \cite{belgioioso2018projected}, as
\begin{align*}
    \phi = 
    \begin{bmatrix}
        \alpha^{-1} & -(\mathbf{1}_{N}^{T} \otimes A)^{T}\\
        -(\mathbf{1}_{N}^{T} \otimes A) & \gamma^{-1} \mathbf{I}
    \end{bmatrix}
\end{align*}

where $\alpha = \mathrm{diag}\left(\alpha_{1},\alpha_{2},\dots,\alpha_{N}\right)\otimes\mathbf{I}_{\tau}$ and coefficients $\{\alpha_{i}\}_{i=1}^{N}$ and $\gamma$ are chosen such that $\phi$ is positive definite (based on design guidelines and Theorem 1 in \cite{belgioioso2018projected}). By substituting the the above \(\phi\) in \eqref{Banach-Picard Eq reformulated} and doing some manipulation we can easily reach to
\\ 
{\footnotesize
\begin{align*}
    &-\begin{bmatrix}
      F\left(\mathbf{u}^{k}\right)\\
      b
    \end{bmatrix}
    \in
    \begin{bmatrix}
        \mathbf{N}_{\Omega}\left(\mathbf{u}^{k+1}\right)\\
        \mathbf{N}_{\mathbb{R}^{m}\geq 0}\left(\lambda^{k+1}\right)
    \end{bmatrix}
    +\\
    &\begin{bmatrix}
        \left(\mathbf{1}_{N}^{T} \otimes A\right)^{T} \lambda^{k+1}+ \alpha^{-1}\left(\mathbf{u}^{k+1}-\mathbf{u}^{k}\right) - \left(\mathbf{1}_{N}^{T} \otimes A\right)^{T}\left(\lambda^{k+1} - \lambda{k}\right)\\
        -\left(\mathbf{1}_{N}^{T} \otimes A\right) \mathbf{u}^{k+1} - \left(\mathbf{1}_{N}^{T} \otimes A\right)\left(\mathbf{u}^{k+1} - \mathbf{u}^{k}\right) + \gamma^{-1}\left(\lambda^{k+1} - \lambda^{k}\right)
    \end{bmatrix}    
\end{align*}}
First we consider the following equations
\begin{align}
    \label{preLagrangianUpdate}
    -b =& -\left(\mathbf{1}_{N}^{T} \otimes A\right) \mathbf{u}^{k+1} - \left(\mathbf{1}_{N}^{T} \otimes A\right)\left(\mathbf{u}^{k+1} - \mathbf{u}^{k}\right) \nonumber\\&+ \gamma^{-1}\left(\lambda^{k+1} - \lambda^{k}\right)
\end{align}
\begin{align}
    \label{preUupdate}
    - F\left(\mathbf{u}^{k}\right) =& \left(\mathbf{1}_{N}^{T} \otimes A\right)^{T} \lambda^{k+1}+ \alpha^{-1}\left(\mathbf{u}^{k+1}-\mathbf{u}^{k}\right) \nonumber\\&- \left(\mathbf{1}_{N}^{T} \otimes A\right)^{T}\left(\lambda^{k+1} - \lambda^{k}\right)
\end{align}
Based on  \eqref{preLagrangianUpdate}, we can find
\begin{align}
    \label{LagrangianUpdate}
    \lambda^{k+1} = \lambda^{k} + \gamma\left(2\left(\mathbf{1}_{N}^{T} \otimes A\right)\mathbf{u}^{k+1} - \left(\mathbf{1}_{N}^{T} \otimes A\right)\mathbf{u}^{k} -b\right)
\end{align}
By substituting \eqref{LagrangianUpdate} in \eqref{preUupdate}, we can reach to
\begin{align}
    \label{Uupdate}
    \mathbf{u}^{k+1} = \mathbf{u}^{k} - \alpha\left(F\left(\mathbf{u}^{k}\right)+ \left(\mathbf{1}_{N}^{T} \otimes A\right) \lambda^{k}\right)
\end{align}
As it has been mentioned in main manuscript, $F\left(\mathbf{u}^{k}\right) = \Gamma \mathbf{u}^{k} + \Lambda$, so \eqref{Uupdate} can express as
\[\mathbf{u}^{k+1} \longleftarrow  \left(  \mathbf{I}- \alpha \Gamma  \right)\mathbf{u}^{k} - \alpha \left(\mathbf{1}_{N}^{T} \otimes A\right)^{T} 
    \lambda^{k} - \alpha \Lambda .\]
Now by applying normal cone operator to $\mathbf{u}^{k+1}$ we have
\[\mathbf{u}^{k+1} \longleftarrow \mathrm{proj}_{\Omega} \left[ \left(  \mathbf{I}- \alpha \Gamma  \right)\mathbf{u}^{k} - \alpha \left(\mathbf{1}_{N}^{T} \otimes A\right)^{T} 
    \lambda^{k} - \alpha \Lambda   \right].\]
Also by applying normal cone operator to $\lambda^{k+1}$ we have
{\footnotesize
\begin{align}
    \lambda^{k+1} &\gets \mathrm{proj}_{\mathbb{R}_{\geq 0}^{m}} \left[ 
  \lambda^{k} + \gamma \left( 2\left(\mathbf{1}_{N}^{T} \otimes A\right)\mathbf{u}^{k+1}-\left(\mathbf{1}_{N}^{T} \otimes A\right)\mathbf{u}^{k}-b   \right) \right]
\end{align}}
%\textcolor{red}{How did you apply this to the paper?\\}
%Now, we briefly outline how Algorithm 1 in old manuscript was derived and worked. 
%It should be noted that based on Comment 4 of Reviewer 2 we upgraded Algorithm 1  to a semi-decentralized one to become computationally scalable for large scale networks.
%\textcolor{red}{Present Algorithms 1 and 2 as in the comment 4 of reviewer 2} %So we also discuss how to reach the new algorithm in the last part of our response.
So, Algorithm \ref{alg.nashcomputation_old} can be used as a centralized algorithm for GNE seeking. \\
The centralized algorithm presented was modified to a semi-decentralized one that employs a coordinator to become GNE seeking computationally scalable for large scale networks.
To arrive at the semi-decentralized algorithm, we undertook these steps. Let's delve deeper into the following update term in Algorithm \ref{alg.nashcomputation_old}
\[\mathbf{u}^{k+1} \longleftarrow \mathrm{proj}_{\Omega} \left[ \left(  \mathbf{I}- \alpha \Gamma  \right)\mathbf{u}^{k} - \alpha \left(\mathbf{1}_{N}^{T} \otimes A\right)^{T} 
    \lambda^{k} - \alpha \Lambda   \right].\]
where $\mathbf{u} = [\mathbf{u}_{1}^{T},\mathbf{u}_{2}^{T},\dots,\mathbf{u}_{N}^{T}]^{T}$.\\
By expanding the right hand side, we found that $\mathbf{u}_{i}$ is updated as
{\footnotesize
\[\mathbf{u}_{i}^{k+1} \longleftarrow \mathrm{proj}_{\Omega_{i}} \left[ \left(  \mathbf{I}_{\tau \times \tau}-2\alpha_{i}(G+\frac{H^T}{N})   \right)\mathbf{u}_{i}^{k} - \alpha_{i}A^{T} 
    \lambda^{k} - \alpha_{i}T_{i}^{T}   \right].\]}

Moreover, we can easily find that $\lambda$ is updated as
{\small
\[
\lambda^{k+1} \longleftarrow \mathrm{proj}_{\mathbb{R}_{\geq 0}^{m}} \left[ 
  \lambda^{k} +\gamma \left(2\ A \sum_{j=1}^{N}\mathbf{u}_{j}^{k+1}-A\sum_{j=1}^{N}\mathbf{u}_{j}^{k}-b   \right) \right]
\]}

Fig.1 in the paper shows the information flow between the agents and coordinator. As we can see, the coordinator does not need to know the local objectives of the agents and only receives the data of $\mathbf{u}_{j}^{k}$ and $\mathbf{u}_{j}^{k+1}$ from the agents and then updates the value of $\lambda$ accordingly. Such a semi-decentralized scheme has been also used in some other studies like \cite{belgioioso2017semi,belgioioso2021semi}.
%\textcolor{red}{Fig. .. shows the information flow between the agents and coordinator. As we can see, the coordinator does not need to know the local objectives of the agents and only receives the data of ... from the agents and then updates the value of ... accordingly. Such a semi-decentralized scheme has been also used in some other studies like [....] }
So we propose a semi-decentralized algorithm, as presented in Algorithm \ref{alg.nashcomputation}.\\
%Based on this comment, we have replaced the new semi-decentralized algorithm with the centralized one in the revised manusript. 
%The replaced algorithm have been highlighted with BLUE.
\begin{algorithm}[h]
\caption{Preconditioned Forward Backward (Centralized Algorithm)}
\label{alg.nashcomputation_old}
\SetAlgoLined
\DontPrintSemicolon
\small{\textbf{Initialization:}}\ \footnotesize{ $k \gets 1$, $\mathbf{u}_{i}^1 \gets \mathbf{u}^{\mathrm{init}}$, $\lambda^1 \gets \lambda^{\mathrm{init}}$}\;
\normalsize
\small{\textbf{Repeat}}
\vspace{-0.1cm}
\scriptsize{
\begin{align*}
    \mathbf{u}^{k+1} &\gets \mathrm{proj}_{\Omega} \left[ \left(  \mathbf{I}- \alpha \Gamma  \right)\mathbf{u}^{k} - \alpha \left(\mathbf{1}_{N}^{T} \otimes A\right)^{T} 
    \lambda^{k} - \alpha \Lambda   \right]\\
    \lambda^{k+1} &\gets \mathrm{proj}_{\mathbb{R}_{\geq 0}^{m}} \left[ 
  \lambda^{k} + \gamma \left( 2\left(\mathbf{1}_{N}^{T} \otimes A\right)\mathbf{u}^{k+1}-\left(\mathbf{1}_{N}^{T} \otimes A\right)\mathbf{u}^{k}-b   \right) \right]\\
  k & \gets k+1
\end{align*}
}
\normalsize{\hspace{-0.2cm}
\small{\textbf{Until}}}
\scriptsize{$\|\mathbf{u}^{k+1} - \mathbf{u}^{k}\| \leq \epsilon_{u}^{\mathrm{stop}}$,
    $\|\lambda^{k+1} - \lambda^{k}\| \leq \epsilon_{\lambda}^{\mathrm{stop}}$}
\end{algorithm}

%So we propose a semi-centralized (decentralized) algorithm, as you can see in Algorithm \ref{alg.nashcomputation} in this response letter.\\
%Based on Reviewer 2 comment, we replace the new algorithm with the centralized one that previously introduced. The replaced algorithm have been highlighted with BLUE. Furthermore, in the revised paper, based on your feedback, we reference the name of the algorithm and cite \cite{belgioioso2018projected} to guide readers seeking more details about the algorithm. We also added the summery of above explanation in supplementary material.
%

\end{document}